\definecolor{linkblue}{rgb}{0.1,0,0.5}
\def\extended{EXTENDED}
 \def\version{EXTENDED}
\def\true{TRUE}
\def\submission{FALSE}
\def\true{TRUE}
\def\bottom{\bot}
\def\extended{EXTENDED}
\newcommand*\externallink[1]{\href{{#1}}{\includegraphics[scale=0.7]{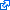}}}
\newcommand{\step}{\text{ }\rightarrow\text{ }}
\newcommand{\mstep}{\text{ }\stackrel{*}{\rightarrow}\text{ }}
\newenvironment{corners}{$\operatornamewithlimits{}_\llcorner^{\ulcorner}$}{$\operatornamewithlimits{}_\lrcorner^{\urcorner}$}
\newcommand{\mtag}[1]{\mathbf{#1}}
\newcommand{\mtagsmall}[1]{\tiny{\mathbf{#1}}}
\newcommand{\mtagtypesafe}{\mtag{typesafe}} 
\newcommand{\mtagtypesafesmall}{\mtagsmall{typesafe}} 
\newcommand{\mtagfailsafe}{\mtag{failsafe}} 
\newcommand{\mtagterminates}{\mtag{terminates}} 
\newcommand{\mtagterminatessmall}{\mtagsmall{terminates}} 
\newcommand{\mtags}{\mathcal{T}ags} 
\newcommand{\uninitialized}{\boxtimes} 
\newcommand{\mtagvarX}{\mathbf{tags}}
\newcommand{\mtagvarXsmall}{\tiny{\mathbf{tags}}}
\newcommand{\mtagvarY}{\mathbf{tags'}}
\newcommand{\mtagtag}{\mathbf{tag}}
\newcommand{\mtagtagsmall}{\tiny{\mathbf{tag}}}
\newcommand{\mmybox}[1]{\tikzpicture[baseline=(text.base)]{\node [draw,rounded corners,fill=red!20] (text) {hihi};}}
\newcommand{\oC}{\theta}
\newcommand{\creation}{\gamma}
\newcommand{\Var}{\mathfrak{V}}
\newcommand{\Class}{\mathfrak{C}}
\newcommand{\Method}{\mathfrak{M}}
\newcommand{\TypeAL}{\mathbb{T}}
\newcommand{\tags}{\mathcal{T}ags}
\newcommand{\LVar}{\Var_L}
\newcommand{\IVar}{\Var_I}
\newcommand{\SVar}{\Var_S}
\newcommand{\result}{\mathbf{r}}
\newcommand{\typeALobj}{\mathbb{O}}
\newcommand{\typeALbool}{\mathbb{B}}
\newcommand{\typeALnum}{\mathbb{N}}
\newcommand{\typeALstring}{\mathbb{S}}
\newcommand{\free}{\mathit{free}}
\newcommand{\Vnull}{\mathit{null}}
\newcommand{\this}{\PL{self}}
\newcommand{\self}{\this}
\newcommand{\PL}[1]{\text{#1}}
\newcommand{\Vector}[1]{\overrightarrow{#1}}
\newcommand{\emptyP}{\result}
\newcommand{\pure}{\varepsilon}
\newcommand{\BaseTypeX}{\mathbb{T}}
\newcommand{\mapbool}{\mathbb{B}}  
\newcommand{\mapnum}{\mathbb{N}}
\newcommand{\semantics}{\mathcal{M}}
\newcommand{\selector}{\mathcal{S}}
\newcommand{\langDyn}{\textbf{dyn}}
\newcommand{\langAL}{\textbf{AL}}
\newcommand{\PSHL}{\mathcal{H}}
\newcommand{\PSAL}{\mathcal{T}}
\newcommand{\info}[1]{{\small #1}}
\begin{document}

\title{A Sound and Complete Hoare Logic for Dynamically-typed, Object-Oriented Programs \ifx \version\extended \newline -- Extended Version -- \fi}


\ifx \submission\true
\author{-- omitted for submission --}
\institute{-- omitted for submission --}
\else
\author{Bj\"orn Engelmann\thanks{Corresponding author} \and Ernst-R\"udiger Olderog}
\institute{University of Oldenburg, Germany\thanks{This work is supported by the German Research Foundation through the Research Training Group (DFG GRK 1765) SCARE (\href{http://www.scare.uni-oldenburg.de}{www.scare.uni-oldenburg.de}).} \\ 
           \email{\{bjoern.engelmann, ernst.ruediger.olderog\}@informatik.uni-oldenburg.de}}
\fi

\maketitle

\begin{abstract}
A simple dynamically-typed, (purely) object-oriented language is defined. A structural operational
semantics as well as a Hoare-style program logic for reasoning about programs in the language in
multiple notions of correctness are given. The Hoare logic is proved to be both sound and (relative)
complete and is -- to the best of our knowledge -- the first such logic presented for a
dynamically-typed language.
\end{abstract}

\section{Introduction \& Related Work}
\label{sec:introduction}

While dynamic typing itself was introduced with the advent of LISP decades ago and more and
more dynamically-typed programs are written as languages like JavaScript, Ruby and Python are
gaining popularity, to the current day, no sound and complete program logic has been published
for any such language.

In an attempt to bridge this Gap between static- and dynamically-typed languages, we focus our inquiry on
completeness (for closed programs) and on studying the proof-theoretic implications of dynamic typing.
This differentiates our work from other axiomatic semantics published mainly for JavaScript
\cite{QinEtAl2011,Gardner12programlogicJS} as their focus lies more on soundness and direct
applicability to real-world programming languages.
\todo{ cite operational semantics? (FullMonty2013,Smeding2009,Filaretti2014)}


Hence, to avoid getting tangled in the details of any real-world programming language, we 
introduce a small dynamically-typed object-oriented (OO) language called \langDyn{}\footnote{%
One may ask whether it is at all possible to obtain a sound and relatively complete
Hoare logic for \langDyn{} in light of Clarke's incompleteness result \cite{Clarke1979}.
However, Clarke's argument is not applicable to \langDyn{} for various reasons elaborated in
\ifx \version\extended
appendix~\ref{app:turing_complete}.
\else
\cite[appendix C]{extended_version}.
\fi}.

Additionally, in previous work \cite{EngelmannOlderogFlick2015} the authors developed a technique for
reducing the effort of verifying a dynamically-typed program to the level of verifying an equivalent
statically-typed one.
This technique, however, assumed the existence of a sound and complete program logic for the dynamically-typed
language. The current work hence substantiates this assumption.


Besides presenting the Hoare logic, there are further technical contributions:

\noindent\textbf{1)} Tagged Hoare Logic, a novel notation for Hoare triples making the notion of correctness explicit
 and thereby allowing the (previously separated) Hoare logics for partial correctness, strong (= failsafe)
 partial correctness, typesafe partial correctness and total correctness to be merged into a single proof system
 and to concisely express the rules of this system.

\noindent\textbf{2)} A novel technique to specify loop variants circumventing a common
 incompleteness issue in Hoare logics for total correctness (see proof of Theorem~\ref{thm:comp_rec_meth}).

 As detailed in
\ifx \version\extended
Section~\ref{sec:translational_approach},
\else
\cite[Section 7]{extended_version},
\fi{}
we consider our results as a stepping stone towards similar proof systems for real-world languages.

 Our paper is oganized as follows.
In Section~\ref{sec:dyn}, we introduce the language \langDyn{}.
In Section~\ref{sec:dyn:operational}, its operational semantics is defined.
In Section~\ref{sec:dyn:axiomatic}, its axiomatic semantics (Hoare logic) is introduced.
In Section~\ref{sec:soundness}, we briefly touch upon soundness of this Hoare logic,
and in Section~\ref{sec:completeness}, we prove its (relative) completeness for closed programs.

\noindent\textbf{Notation:} $\mathbb{N}^n_m \equiv \{n,...,m\}$, $\mathbb{N}_m \equiv \mathbb{N}^0_m$, 
$S_1 S_2$ denotes concatenation of the sequences $S_1$ and $S_2$, $\{S\}$ is the set of all elements of
the sequence $S$. 
%
 

\section{Dynamically-Typed Programs}
\label{sec:dyn}

We will study a language called \textbf{dyn}, whose syntax is depicted in Figure~\ref{fig:dyn-syntax}. Like its popular real-world siblings JavaScript,
Ruby and Python, \textbf{dyn} is a dynamically-typed purely OO-programming language. However,
to focus our inquiry on dynamic typing, we chose not to model other features commonly found in
these languages like method update, closures or eval().

As customary in such languages, \langDyn{} desugars operations to method calls.
Consequently, the only built-in operation in \langDyn{} is object equality. Everything else is defined in
\langDyn{} itself. However, a syntactic distinction between built-in operations and method calls is necessary
for the convenient distinction between (side-effect-free) expressions and (side-effecting) statements. In order
to make \langDyn{} programs resemble their real-world counterparts, we had to allow method calls as well as
assignments in expressions.
For example, $ \PL{a} := \PL{b} := 5 $ is a valid \langDyn{} expression with the side-effect of assigning
5 to both $\PL{a}$ and $\PL{b}$.

Since types in \langDyn{} are a property of values rather than variables, there is no need to declare the latter.
Following its real-world counterparts, both local- and instance variables in \langDyn{} are created upon their
first assignment. 
Accessing a variable that has
not been assigned before results in a (runtime) type error.

Other reasons for type errors are non-boolean conditions in conditionals or while-loops
and method call receivers whose class does not support a method matching name and arity of the call
(MethodNotFound).

\todo{Interpretive model for dynamically-typed languages differs from the one used by Cook}

\todo{mapping predicates - in particular those for recursive data types like lists, strings, trees, etc. are usually recursive}

\section{Operational Semantics}
\label{sec:dyn:operational}

In Figure~\ref{fig:dyn_op_sem}, we define an operational semantics of \langDyn{}
in the style of Hennessy and Plotkin~\cite{HP79,Plo04}.
It is based on a set $\mathit{Conf}$ of \emph{configurations}, which are
pairs $C =\langle s, \sigma \rangle$ consisting of a statement $s$ of \langDyn{}
and a state $\sigma$, assigning values to variables.
By syntax-directed rules, the operational semantics defines which
\emph{transitions} $\langle s, \sigma \rangle \step \langle s', \sigma' \rangle$
are possible between configurations.

As \langDyn{} is a purely OO-language, the value domain is the set
$\typeALobj$  of \emph{objects},
including the special objects $\Vnull$ (the usual OO-null value) and
$\uninitialized$ (marking non-existing variables). The definition of states and
state updates is standard and therefor omitted (see e.g.~\cite{AptDeBoerOlderogBook2009}\todo{chapter, page?}).

For a given program, we denote the set of all variables as $\Var = \Var_L \uplus \Var_I \uplus \Var_S$
\info{where $\Var_L$ is the set of local variables, $\Var_I$ the set of instance variables and $\Var_S = \{\self, \result\}$
the set of special variables}. $\self$ is special because it cannot be assigned to in programs and $\result$ will
be explained below. We also use the set of all classes $\Class$ with each class $C \in \Class$ having a set of methods
$\Method_C$ and $\Method = \bigcup_{C \in \Class}\Method_C$.

Usually, in a structural operational semantics, expressions are assumed to be side-effect-free and
the effect of assignments can hence be expressed as an axiom $\langle v := e, \sigma \rangle \step \langle \emptyset, \sigma[v := \sigma(e)] \rangle$.
In \langDyn{}, however, expressions are side-effecting. We hence need
to evaluate the assignment $v := e$ in two steps: first evaluating the expression $e$ and then assigning its
resulting value to the variable $v$. Furthermore, we need an interface between these two steps: A way by
which the assignment can determine the result of the previously evaluated expression $e$.
For this purpose, we introduce a special variable $\result$ of type $\typeALobj$ as
well as the convention that every expression or statement will store its result in $\result$.
Note that this construction works only due to dynamic typing: In a statically-typed programming language,
expressions would evaluate to values of different types which could not well be assigned to a single variable.
The choice of object as the unifying supertype of all values is common in pure OO-languages:
When everything is an object, clearly every expression will evaluate to one.
Furthermore, as $\result$ is the only statement that does not change anything (not even $\result$), we
define the empty program as $\result$, stipulate $(\emptyP; s) \equiv (s; \emptyP) \equiv s$ for all statements $s$
and call the configurations $\langle \emptyP, \sigma \rangle$ for some state $\sigma$ final.

For \langDyn{}, we use class-based OO and model object creation as activation\footnote{Assuming an infinite sequence of already existing, but deactivated objects, object creation boils down to picking the next one and marking it as ``activated''.}.
We introduce a ``representative'' object $\oC_C$ for each class $C$ as well as a special instance variable
$@\textbf{c}$ not allowed to occur in programs for maintaining both the instance-class relation
and the activation state of each object.

We call an object $o$ with $o.@\mathbf{c} = \Vnull$ \emph{inactive}, meaning it is ``not yet created''.
Initially, all objects (except $\Vnull$ and the representatives $\oC_C$ for each class $C$) are inactive.
We suppose an infinite enumeration of objects $o_1,o_2,...$ containing every object
(both active and inactive) exactly once and introduce a function $\creation{}: \Sigma \mapsto \typeALobj$
mapping every state $\sigma \in \Sigma$ to the object $o_k$ with the least index $k$ that is inactive in
$\sigma$.

Upon its creation, an object $o$ is assigned a class $C$ and is henceforth regarded an instance of $C$.
Technically, this is achieved by resetting the value of $o.@\mathbf{c}$ to $\oC_C$ (see the rule for
object creation). We use $\mathit{init}_C$ to denote the initial (internal) state of an object of class $C$:
$\mathit{init}_C.@\mathbf{c} = \oC_C$ and $\mathit{init}_C.@v = \uninitialized$ for all $@v \in \Var_I \setminus \{@\mathbf{c}\}$.

We can then formally define the predicate $bool(o)$ and $bool(o,b)$ used in Figure~\ref{fig:dyn_op_sem} to check for
boolean values as

$bool(o) \equiv o.@\mathbf{c} = \oC_{bool} \text{ for all } o \in \typeALobj \text{ and}$

$bool(o,b) \equiv bool(o) \wedge b \leftrightarrow o.@to\_ref \not= \Vnull \footnote{Other methods to distinguish the values true and false are conceivable.} \text{ for all } o \in \typeALobj, b \in \typeALbool \;.$
\begin{figure}[p]
\begin{minipage}{0.5\textwidth}
\underline{Syntax of \textbf{dyn}:}

$\mathit{Prog} \ni \pi ::= \Vector{\mathit{class}} \PL{ } s$

$\mathit{Class} \ni \mathit{class} ::= \mathbf{class}\PL{ C $<$ C }\{ \Vector{\mathit{meth}} \}$

$\mathit{Meth} \ni \mathit{meth} ::= \mathbf{method}\PL{ m}(\Vector{\PL{u}}) \{ s \}$

\hspace*{0.8cm}$\mid \mathbf{rename}\PL{ m m}$

$\mathit{Stmt} \ni s ::= s;s \mid e$

$\mathit{Expr} \ni e ::= \PL{null} \mid \PL{u} \mid \PL{@v} \mid \self \mid e == e$

\hspace*{0.8cm}$\mid e \PL{ is\_a? C} \mid e.\PL{m}(\Vector{e}) \mid \mathbf{new}\PL{ C}(\Vector{e})$

\hspace*{0.8cm}$\mid \PL{u} := e \mid \mathbf{if}\PL{ } e \PL{ }\mathbf{then}\PL{ } s \PL{ }\mathbf{else}\PL{ } s \PL{ }\mathbf{end}$

\hspace*{0.8cm}$\mid \PL{@v} := e \mid \mathbf{while}\PL{ } e \PL{ }\mathbf{do}\PL{ } s \PL{ }\mathbf{done}$

($\PL{u} \in \Var_L, \PL{@v} \in \Var_I, \PL{C} \in \Class, \PL{m} \in \Method$)
\end{minipage}
\begin{minipage}{0.5\textwidth}
\underline{Syntactic sugar:} 

$e_1 \oplus e_2 \equiv e_1.\PL{m}_{\oplus}(e_2)$

$\PL{\textbf{if} } e \PL{ \textbf{then} } s \PL{ \textbf{end}} \equiv$

$\PL{\textbf{if} } e \PL{ \textbf{then} } s \PL{ \textbf{else} null \textbf{end}}$

$\mathit{false} \equiv \mathbf{new}\PL{ bool(null)}$ \hfill $[] \equiv \mathbf{new}\PL{ list()}$

$\mathit{true} \equiv \mathit{false}.\PL{not()}$ \hfill $[...,o] \equiv \PL{ [...].add(o)}$

$0 \equiv \mathbf{new}\PL{ num(null)}$

$n \equiv \PL{$(n-1)$.succ()}$ for $n \in \mathbb{N}$

$"" \equiv \mathbf{new}\PL{ string(null, null)}$,

$\PL{"...a"} \equiv \PL{"..."}.\PL{addchar}(n_{\PL{a}})$
where $n_{\PL{a}} \in \mathbb{N}$ is the ASCII-code of character $\PL{a}$.

\end{minipage}











\caption{\label{fig:dyn-syntax}Syntax of \textbf{dyn}}
\end{figure}
\todo{$e==e$ and $e \PL{ is\_a? C}$ were added. Also add to semantics, etc.}
\todo{mapping $\oplus$ to $m_\oplus$}

\begin{figure}[p]
\begin{enumerate}
 \item $\langle \Vnull, \sigma\rangle  \step \langle\emptyP, \sigma[\result := \Vnull]\rangle $

 \item $\langle v, \sigma\rangle  \step \langle\emptyP, \sigma[\result := \sigma(v)]\rangle $ \info{where $v \in \Var$ and $\sigma(v) \not= \uninitialized$}

 \item $\langle v, \sigma\rangle  \step \langle \emptyP, \textbf{typeerror} \rangle $ \info{where $v \in \Var$ and $\sigma(v) = \uninitialized$}

 \item \AxiomC{$\langle e, \sigma\rangle  \mstep \langle\emptyP, \tau\rangle $}
       \RightLabel{\info{where $v \in \Var$}}
       \UnaryInfC{$\langle v := e, \sigma\rangle  \step \langle\emptyP, \tau[v := \tau(\result)]\rangle $}
       \DisplayProof
 
 \item \AxiomC{$\langle s_1, \sigma\rangle  \step \langle s_2, \tau\rangle $}
       \UnaryInfC{$\langle s_1; s, \sigma\rangle  \step \langle s_2; s, \tau\rangle $}
       \DisplayProof

 \item \AxiomC{$\langle e, \sigma\rangle  \mstep \langle\emptyP, \tau\rangle $}
       \AxiomC{$\mathit{bool}(\tau(\result), \mathrm{true})$}
       \BinaryInfC{$\langle \PL{\textbf{if} } e \PL{ \textbf{then} } s_1 \PL{ \textbf{else} } s_2 \PL{ \textbf{end}}, \sigma\rangle  \step \langle s_1, \tau\rangle $}
       \DisplayProof
       \hfill
       \AxiomC{$\langle e, \sigma\rangle  \mstep \langle\emptyP, \tau\rangle $}
       \AxiomC{$\mathit{bool}(\tau(\result), \mathrm{false})$}
       \BinaryInfC{$\langle \PL{\textbf{if} } e \PL{ \textbf{then} } s_1 \PL{ \textbf{else} } s_2 \PL{ \textbf{end}}, \sigma\rangle  \step \langle s_2, \tau\rangle $}
       \DisplayProof

 \item \AxiomC{$\langle e, \sigma\rangle  \mstep \langle\emptyP, \tau\rangle $}
       \AxiomC{$\tau(\result) = \Vnull$}
       \BinaryInfC{$\langle \PL{\textbf{if} } e \PL{ \textbf{then} } s_1 \PL{ \textbf{else} } s_2 \PL{ \textbf{end}}, \sigma\rangle  \step \langle \emptyP, \mathbf{fail}\rangle $}
       \DisplayProof

 \item \AxiomC{$\langle e, \sigma\rangle  \mstep \langle\emptyP, \tau\rangle $}
       \AxiomC{$\neg \mathit{bool}(\tau(\result))$}
       \BinaryInfC{$\langle \PL{\textbf{if} } e \PL{ \textbf{then} } s_1 \PL{ \textbf{else} } s_2 \PL{ \textbf{end}}, \sigma\rangle  \step \langle \emptyP, \mathbf{typeerror}\rangle $}
       \DisplayProof

 \item $\langle \PL{\textbf{while} } e \PL{ \textbf{do} } s \PL{ \textbf{done}}, \sigma\rangle  \step \langle \PL{\textbf{if} } e \PL{ \textbf{then} } s; \PL{\textbf{while} } e \PL{ \textbf{do} } s \PL{ \textbf{done} \textbf{else} null \textbf{end}}, \sigma\rangle$

 \item $\langle\Vector{u} := \Vector{v}, \sigma\rangle  \step \langle\emptyP, \sigma[\Vector{u} := \sigma(\Vector{v})]\rangle $ \info{where $\Vector{u}, \Vector{v} \in \Var^+_L$}

 \item $\langle\PL{\textbf{begin local} } \Vector{u} := \Vector{v}; S \PL{ \textbf{end}}, \sigma\rangle  \step \langle\Vector{u} \Vector{\overline{u}} := \Vector{v} \Vector{\uninitialized}; S; \Vector{u} \Vector{\overline{u}} := \sigma(\Vector{u}\Vector{\overline{u}}), \sigma\rangle$

 \info{where $\{\Vector{\overline{u}}\} = \Var_L \setminus (\{\Vector{u}\} \cup \Var_S)$ and $\Vector{\uninitialized}$ is a fitting sequence of $\uninitialized$ values.}

 \item \AxiomC{$\langle e_i, \sigma_i\rangle \mstep \langle \emptyP, \sigma_{i+1}\rangle $ for all $i \in \mathbb{N}_n$}
       \UnaryInfC{$\langle e_0.m(e_1,...,e_n), \sigma_0\rangle  \step \langle\PL{\textbf{begin local} } \self, \Vector{u} := \sigma_1(\result),...,\sigma_{n+1}(\result); s \PL{ \textbf{end}}, \sigma_{n+1}\rangle$}
       \DisplayProof
       
 \info{where $\sigma_1(\result) \not= \Vnull$, $\mathbf{method}\PL{ } m(u_1,...,u_n) \{ s \} \in \Method_C$ and $\sigma_1(\result.@\mathbf{c}) = \oC_C$.}
 
 \item \AxiomC{$\langle e_0, \sigma_0\rangle  \mstep \langle\emptyP, \sigma_1\rangle $}
       \AxiomC{$\sigma_1(\result) = \Vnull$}
       \BinaryInfC{$\langle e_0.m(e_1,...,e_n), \sigma_0\rangle  \step \langle\emptyP, \mathbf{fail}\rangle $}
       \DisplayProof
       \hfill
       \AxiomC{$\langle e_0, \sigma_0\rangle  \mstep \langle\emptyP, \sigma_1\rangle $}
       \AxiomC{$\sigma_1(\result) \not= \Vnull$}
       \BinaryInfC{$\langle e_0.m(e_1,...,e_n), \sigma_0\rangle  \step \langle\emptyP, \mathbf{typeerror}\rangle $}
       \DisplayProof

 \hfill \info{where $\sigma_1(\result.@\mathbf{c}) = \oC_C$ and $\not\exists \mathbf{method}\PL{ } m(u_1,...,u_n) \{ s \} \in \Method_C$}

 \item $\langle\mathbf{new}\PL{ } C(e_1,...,e_n), \sigma\rangle  \step \langle\mathbf{new}_C.init(e_1,...,e_n), \sigma\rangle $

 \item $\langle\mathbf{new}_C, \sigma\rangle  \step \langle\emptyP, \sigma[o := \mathit{init}_C][\result := o]\rangle $ \info{where $o = \creation(\sigma)$}

 \item \AxiomC{$\begin{matrix}
                 \langle e_i, \sigma_i\rangle \mstep \langle\emptyP, \sigma_{i+1}\rangle$ for all $i \in \{0,1\} \\
                 \exists b:\typeALbool \bullet b \leftrightarrow \sigma_1(\result) = \sigma_2(\result) \wedge s_r \equiv true \wedge b \vee s_r \equiv false \wedge \neg b
                \end{matrix}$}
       \UnaryInfC{$\langle e_0 == e_1, \sigma_0 \rangle \step \langle s_r, \sigma_2 \rangle$}
       \DisplayProof
 \item \AxiomC{$\langle e, \sigma\rangle \mstep \langle\emptyP, \sigma_1\rangle$,
               $\exists b:\typeALbool \bullet b \leftrightarrow \sigma_1(\result.@\mathbf{c}) = \oC_C$,
               $s_r \equiv true \wedge b \vee s_r \equiv false \wedge \neg b$}
       \UnaryInfC{$\langle e \; is\_a? \; C, \sigma \rangle \step \langle s_r, \sigma_1 \rangle$}
       \DisplayProof
\end{enumerate}
\caption{\label{fig:dyn_op_sem} \langDyn's structural operational semantics.}
\end{figure}
\todo{add constants $\oC{C}$ for all classes $C$ to the assertion language}
\todo{add rules for try..catch-blocks}
Note how the rule for assignment uses the two-step idea to handle side-effecting expressions.
The rules for conditionals and while loops also use it to evaluate
the condition first and then branch on its result. Since no type system guarantees this result to be
boolean, further distinguished behaviors for failures and type errors are
necessary. The same holds for receivers of method calls.

Additionally, the rules for method call (or better: \textbf{begin local}-blocks)
and object creation instantiate all local- and instance variables to $\uninitialized$,
which marks them as ``not yet created'' and causes $\mathbf{typeerror}$ in the rule of variable access.

Note also the handling of special variables in method calls: on entry, $\self$ is set to the receiver of the
method call while on exit $\result$ intentionally remains unmodified to pass the return value back to the
caller.

\ifx \version\extended
Constructors are normal methods conventionally named $\mathit{init}$ that are called on newly created
instances
directly after they were created. The instance creation (activation) itself is called $\textbf{new}_C$.
Note that \textbf{new} $C(...)$ returns the constructor's return value
which is not necessarily the newly created instance. Also note that calling \textbf{new} $C'(...)$ for a class $C'$
that does not have a method $init$ results in a $\mathbf{typeerror}$.
\fi

\section{Axiomatic Semantics}
\label{sec:dyn:axiomatic}

\subsection{Tagged Hoare Logic}

The original paper of Hoare \cite{Hoa69} considers partial correctness. 
Other ``notions of correctness'' like strong partial correctness and total correctness were added
later as separate proof systems.
While termination as a liveness property might justify this special handling,
there seem to be little reason to grant this special place also to properties like failsafety and
typesafety. They do, however, affect the proof rules (mostly by adding additional preconditions) and
hence triggered the creation of new proof systems for new ``notions of correctness''. Additionally, the
term ``total correctness'' was interpreted as ``the absence of any kind of fault'' and hence 
strongly depends on what other faults the authors are considering.
Furthermore, in this abundance of available proof systems, tool designers are forced to
choose which one to implement, depriving their users of the choice which properties
they actually want to verify. From a tool-design perspective, it would be much better to make all
properties part of the specification, have a single proof system dealing with them and
allowing the users to choose which guarantees to derive for which part of the program.
We hence propose the formalism of \emph{tagged Hoare logic}, a uniform framework for all these
properties featuring a single proof system to treat them.

A (big step) program semantics maps programs and initial states to sets of final states.
Traditionally, each notion of correctness needs its own program semantics as they differ in what
characteristics of a computation they guarantee. We define the (infinite) set of
(finite or infinite) computations as 
$Comp = \mathit{Conf}^* \cup \mathit{Conf}^\omega$ and those of a program $s$
starting in an initial state $\sigma$ as
\[Comp(s,\sigma) = \{ C_0,C_1,... \mid C_0 = \langle s,\sigma \rangle \wedge \forall i \bullet C_i \rightarrow C_{i+1} \} \subset Comp\;.\]
\noindent We use the symbol $\rho$ to denote elements of $Comp$ and define the following tags along with their
respective error states:

$\tags = \{ \mtagterminates, \mtagtypesafe, \mtagfailsafe \}$, $\Sigma_\bottom = \{\bottom, \mathbf{typeerror}, \mathbf{fail}\}$,

$\pitchfork: \tags \mapsto \Sigma_\bottom$,  $\Sigma_+ = \Sigma \uplus \Sigma_\bottom$

$\pitchfork(\mtagterminates) = \bottom$, $\pitchfork(\mtagtypesafe) = \mathbf{typeerror}$, $\pitchfork(\mtagfailsafe) = \mathbf{fail}$

\noindent Their behaviour may be defined as a selector for their respective characteristic:

$\selector: \tags \cup \{\emptyset\} \mapsto Comp \mapsto \mathcal{P}(\Sigma_+)$

$\selector_{\emptyset}(\rho) = \begin{cases}
                    \{ \tau \} & \text{ if } \rho = C_0,...,C_n \wedge C_n = \langle \emptyP,  \tau \rangle \wedge \tau \in \Sigma \\
		    \{ \} & \text{ otherwise}
                   \end{cases}
$

$\selector_{\mtagterminatessmall}(\rho) = \begin{cases}
                                             \{ \bottom \} & \text{ if } \rho \text{ is infinite} \\
					     \{\} & \text{ otherwise}
                                            \end{cases}
$

$\selector_{\mtagtagsmall}(\rho) = \begin{cases}
                                             \{ \pitchfork(\mtagtag) \} & \text{ if } \rho = C_0,...,C_n \wedge C_n = \langle \emptyP, \pitchfork(\mtagtag) \rangle \\
					     \{\} & \text{ otherwise}
                                   \end{cases}
$\newline
for all other tags.
%
%
Finally, we are able to define tagged program semantics
\[
\semantics: \mathcal{P}(\tags) \mapsto Prog \mapsto \Sigma \mapsto \mathcal{P}(\Sigma_+)
\]
allowing arbitrary combinations of correctness notions. Let $\mtagvarX \subseteq \tags$, then
\[\semantics_{\mtagvarXsmall}\llbracket s \rrbracket(\sigma) = \bigcup \{ \selector_{\mtagtagsmall}(\rho) \mid \rho \in Comp(s,\sigma), \mtagtag \in \mtagvarX \cup \{ \emptyset \} \}\]
\noindent which is certainly the most central ingredient of a Tagged Hoare Logic. However, we first need to extend
the semantics of our assertions to also include tags
\[ \llbracket p \rrbracket_{\mtagvarXsmall} = \{ \sigma \mid \sigma \in \Sigma  \wedge \sigma, \mtagvarX \models p \} \]
before we can properly define the meaning of our Tagged Triples:

\begin{definition}[Tagged Hoare Triples]


$\models \{p\} s \{\mtagvarX \wedge q\}$ \hfill iff \hfill
$\semantics_{\mtagvarXsmall}\llbracket s \rrbracket(\llbracket p \rrbracket_{\mtagvarXsmall}) \subseteq \llbracket q \rrbracket_{\mtagvarXsmall}$

\noindent where $\models$ denotes semantic truth of \emph{Tagged Hoare Triples}.

\end{definition}

\subsection{Assertion Language}

Before going into details of the program logic, we introduce the assertion language \langAL{}. Its syntax
is depicted in Figure~\ref{fig:dyn_ass_lang}. Essentially, it is predicate logic
with quantification over finite sequences of typed elements -- weak second order logic.
We extend the logic with constants $c_{\pure}$ and operations $op(\Vector{l})$
corresponding to \textbf{dyn}'s syntactic sugar for boolean values, natural numbers, strings and lists
(which includes the usual arithmetic operations on both booleans and natural numbers).
Also, $c_{\pure}$ contains constants $\oC_C$ denoting the representative objects of all classes $C \in \Class$.
Note that our assertion language is statically typed, as usual. Its type system however is simplistic:
basic types $\BaseTypeX = \{\typeALnum, \typeALobj, \typeALbool, \typeALstring\}$ form a flat
lattice with $\top$ and $\bottom$ and a type constructor $\tau^{*}$ for finite sequences of
elements of type $\tau$.

Assertions contain typed logical expressions ($l$).
Such expressions consist of accesses to logical variables (of some type $t \in \TypeAL$),
local program variables (of type $\mathbb{O}$)
including the self-reference $\self$, instance variables ($l.@x$ where both $l$ and the result are 
of type $\mathbb{O}$), typed constants and typed operations.
Note that contrary to programming expressions, logical expressions are able to access
instance variables of objects other than $\self$.

Assertions are then constructed from equations between logical expressions of identical type, boolean connectives and
quantification over finite sequences.

Following \cite{DeBoerPierik2003},
undefined operations like dereferencing a $\Vnull$ value or accessing a sequence with an index
that is out of bounds ($l[n]$ with $n \ge |l|$) yield a $\Vnull$ value and equality is non-strict with respect
to such values ($\Vnull = \Vnull$ is $\mathit{true}$) in order to keep assertions two-valued.
Also, for logical expressions $l \in LExp$, we extend the state-access to $\sigma(l)$ in the
canonical way.\todo{also give a semantics for the assertions! at least: $L.@v$}



To link programming language-objects with assertion-values,
we define
\begin{definition}[Mapping Predicates]\footnote{The predicate $\mapnum(o,n)$ is recursive. However, the
technique used for proving the case for primitive recursion in 
\ifx \version\extended
Lemma~\ref{lem:trans_mu_rec_to_ass}
\else
\cite[Lemma 5]{extended_version}
\fi
allows expressing it in \langAL.}
\footnote{$@\mathrm{pred}$ and $@\mathrm{to\_ref}$ are instance variables of the classes num and bool respectively.} ($\forall o: \typeALobj$, $n: \typeALnum$, $b: \typeALbool$)

$ \mapnum(o) \equiv o \not= \Vnull \wedge o.@\mathbf{c} = \oC_{num} $

$ \mapnum(o,n) \equiv \mapnum(o) \wedge n = 0 \rightarrow o.@\mathrm{pred} = \Vnull \wedge n > 0 \rightarrow \mapnum(o.@\mathrm{pred},n-1)$
\vspace{0.1cm}

$\mapbool(o) \equiv o \not= \Vnull \wedge o.@\mathbf{c} = \oC_{bool} \hspace{1cm}
 \mapbool(o,b) \equiv \mapbool(o) \wedge b \leftrightarrow o.@\mathrm{to\_ref} \not= \Vnull$
\end{definition}

To see that mapping predicates are necessary for completeness, consider the intermediate assertion $p$ in the following program

\vspace{0.1cm}
\noindent$P \equiv \mathbf{if}\PL{ }b\PL{ \textbf{then} }x := 5\PL{ \textbf{else} }x := \mathit{true}\PL{ }\mathbf{end} \{p\};$

\noindent\hspace{0.75cm}$ \mathbf{if}\PL{ } x \PL{ is\_a? } bool \PL{ \textbf{then} }\mathbf{if}\PL{ }x\PL{ \textbf{then} }x := 10\PL{ }\mathbf{end}\PL{ \textbf{else} }x := x * 2\PL{ \textbf{end}}$
\vspace{0.1cm}

Since \langAL{} is statically typed, we must also give a type to the program variable $x$.
Now, giving it the type $\typeALnum$ would allow us to express $x = 5$, but not $x = true$ while giving it the
type $\typeALbool$ raises the converse problem. However, using mapping predicates, it is possible to
accurately describe the set of intermediate states as $\mapnum(x,5) \vee \mapbool(x,true)$. From this observation it
is not hard to see that $\{true\} P \{ x = 10 \}$ (or $\{true\} P \{ \mapnum(x,10) \}$) is not derivable
without mapping predicates.
We use the notation $\sigma, \mtagvarX \models a$ to denote the fact that the assertion $a$ is true in
the state $\sigma$ under the tags $\mtagvarX$. The definition of $\models$ is
standard except for the case
\[ \sigma, \mtagvarX \models \mtagtag  \text{ iff } \mtagtag \in \mtagvarX. \]
\begin{figure}
$\mathit{Asrt} \ni a ::= l = l \mid \llbracket l \rrbracket \in \{Cl\} \mid \neg a \mid a \wedge a \mid \exists v : t^* \bullet a \mid \mtagtag$  \hfill $t \in \TypeAL$, $\mtagtag \in \tags$

$\mathit{LExp} \ni l ::= v \mid \PL{u} \mid l.\PL{@x} \mid \Vnull \mid \uninitialized \mid \self \mid$ \textbf{if} $l$ \textbf{then} $l$ \textbf{else} $l$ \textbf{end} $\mid c_{\pure} \mid op(\Vector{l})$

$\{+,-,*,div,mod,<,=,\wedge,\neg,|\cdotp|,\cdotp[n]\} \subseteq op$ \hfill (brackets are used for disambiguation)

$Cl ::= \epsilon \mid CL$ \hfill $CL ::= C \mid C, CL$ \hfill $C \in \Class$

with the usual abbreviations:
$a_1 \vee a_2 \equiv \neg(\neg a_1 \wedge \neg a_2)$, $a_1 \rightarrow a_2 \equiv \neg a_1 \vee a_2$,
$a_1 \leftrightarrow a_2 \equiv a_1 \rightarrow a_2 \wedge a_2 \rightarrow a_1$, $\forall v:t^* \bullet a \equiv \neg \exists v:t^* \bullet \neg a$,
$true \equiv (\Vnull = \Vnull)$, $false \equiv \neg true$,
$Q v:t \bullet a \equiv Q v:t^* \bullet |v| = 1 \wedge a[v[0]/v]$ for $Q \in \{\forall, \exists\}$,
$l \equiv l = true$ if $l$ is of type $\typeALbool$.

\caption{\label{fig:dyn_ass_lang} Syntax of the assertion language \langAL{}}
\end{figure}

\todo{define $op$ as natural numbers + arithmetic}
\todo{give types for the operations in the extended version}
\todo{consistently use verbose x for program variables and $v$ for logical variables}

\subsection{(Tagged) Hoare Logic for Dynamically Typed Programs}

Our exposition of the proof rules of $\PSHL$ will use three substitutions on assertions.
Proper definitions for all three can be found in 
\ifx \version\extended
Appendix~\ref{app:substitutions}.
\else
\cite[Appendix B]{extended_version}.
\fi

The special variable $\result$ may appear in both pre- and postconditions. 
In preconditions it references some initial value,
in postconditions the return value of the last executed expression.
Note that it is important that $\result$ can appear in preconditions. Otherwise the weakest
precondition $WP(\result, \result = null)$ would not be expressible which would induce incompleteness.

For a \langDyn{} statement $s$ let $var(s)$ ($change(s)$) denote the set of
variables accessed in $s$ (appearing on the left of an assignment in $s$).
For an assertion $p$ let $free(p)$ denote the set of free variables of $p$
and $p[v := t]$ the result of substituting $t$ for $v$ in $p$.

%

\vspace{0.1cm}
%
%
\noindent AXIOM: VAR \hfill VAR-TAG

\begin{center}
$\{p[\result := v]\} v \{p\}$ \hfill $\{\mtagtypesafe \rightarrow v \not= \uninitialized\} v \{\mtagvarX\}$
\end{center}

\info{Note: includes the case of $v \equiv \self$.}

\noindent AXIOM: IVAR \hfill IVAR-TAG

\begin{center}
$\{p[\result := \self\PL{.@v}]\} @v \{p\}$ \hfill $\{ \mtagtypesafe \rightarrow \self\PL{.@v} \not= \uninitialized \} @v \{ \mtagvarX \}$
\end{center}

\noindent RULE: ASGN (both normal and instance variables) \hfill AXIOM: CONST
\begin{center}
\AxiomC{$\{p\} e \{ \mtagvarX \wedge q[v := \result] \}$}
\RightLabel{where $v \in \Var$}
\UnaryInfC{$\{p\} v := e \{ \mtagvarX \wedge q\}$}
\DisplayProof
\hfill
$\{p[\result := \Vnull]\} \PL{null} \{\mtagvarX \wedge p\}$
\end{center}
\noindent RULE: SEQ

\begin{center}
\AxiomC{$\{p\} s_1 \{ \mtagvarX \wedge r\}$}
\AxiomC{$\{r\} s_2 \{ \mtagvarX \wedge q\}$}
\BinaryInfC{$\{p\} s_1; s_2 \{ \mtagvarX \wedge q\}$}
\DisplayProof
\end{center}

\noindent RULE: COND
\begin{center}
\AxiomC{$\begin{matrix}
          \{p\} e \{\mtagvarX \wedge r \wedge \mtagfailsafe \rightarrow \result \not= \Vnull \wedge \mtagtypesafe \rightarrow \result \not= \Vnull \rightarrow \mapbool(\result) \} \\
          \{r \wedge \mapbool(\result,true)\} s_1 \{\mtagvarX \wedge q\} \\
          \{r \wedge \mapbool(\result,false)\} s_2 \{\mtagvarX \wedge q\}
         \end{matrix}$}
       
\UnaryInfC{$\{p\}$ \textbf{if} $e$ \textbf{then} $s_1$ \textbf{else} $s_2$ \textbf{end} $\{\mtagvarX \wedge q\}$}
\DisplayProof
\end{center}

\noindent RULE: LOOP
\begin{center}
\AxiomC{$\begin{matrix}
          \{p\} e \{ \mtagvarX \wedge p' \wedge \mtagfailsafe \rightarrow \result \not= \Vnull \wedge \mtagtypesafe \rightarrow \result \not= \Vnull \rightarrow \mapbool(\result) \} \\
          \{p' \wedge \mapbool(\result, true) \} s \{ \mtagvarX \wedge p \} \\
          \{p' \wedge \mapbool(\result, true) \wedge r(z) \} s;e \{p' \wedge \mtagterminates \rightarrow \forall z':\typeALnum \bullet r(z') \rightarrow z' < z \}
         \end{matrix}$}
\UnaryInfC{$\{p\}$ \textbf{while} $e$ \textbf{do} $s$ \textbf{done} $\{ \mtagvarX \wedge p'[\result := b] \wedge \mapbool(b, false) \wedge \result = \Vnull\}$}
\DisplayProof
\end{center}
\info{where $b$ is a logical variable of type $\typeALbool$,
$z$ is a logical variable of type $\typeALnum$ that does not appear in $p,p',e$ or $s$,
$r(z)$ is a predicate with $z$ among its free variables such that $\forall \sigma \bullet \sigma \models p' \rightarrow \exists z':\typeALnum \bullet r(z')$ and $r(z')$ is the result of substituting $z'$ for $z$ in $r(z)$.}

\noindent RULE: CONS
\begin{center}
\AxiomC{$p \rightarrow p_1, \{p_1\} s \{ \mtagvarY \wedge q_1\}, q_1 \rightarrow q, \mtagvarY \supseteq \mtagvarX$}
\UnaryInfC{$\{p\} s \{ \mtagvarX \wedge q\}$}
\DisplayProof
\end{center}


\noindent RULE: BLCK \hfill AXIOM: PASGN
\begin{center}
\AxiomC{$\begin{matrix}
          \{p\} \Vector{u}\Vector{\overline{u}} := \Vector{t}\Vector{\uninitialized}; s \{\mtagvarX \wedge q\} \\
          (\Var_L \setminus \{\result\}) \cap \free(q) = \emptyset
         \end{matrix}$}
\UnaryInfC{$\{p\}$ \textbf{begin local} $\Vector{u} := \Vector{t}; s \text{ \textbf{end}} \{\mtagvarX \wedge q\}$}
\DisplayProof
\hfill
$\{p[\Vector{u} := \Vector{t}]\} \Vector{u} := \Vector{t} \{\mtagvarX \wedge p\}$
\end{center}
\info{where $\{\Vector{u}\} \subseteq \Var_L$ and $\{\Vector{t}\} \subseteq \Var_L \cup \{\PL{null}\}$,
$\{\Vector{\overline{u}}\} = \Var_L \setminus (\{\Vector{u}\} \cup \Var_S)$ and $\Vector{\uninitialized}$ is a fitting sequence of $\uninitialized$ constants.}

\noindent RULE: METH
%
%
\begin{center}
\AxiomC{$\begin{matrix}
          \{p_i\} e_i \{ \mtagvarX \wedge p_{i+1}[v_i := \result] \}\text{ for }i \in \mathbb{N}_n \\
          \{p_{n+1}\} v_0.m(v_1,...,v_n) \{ \mtagvarX \wedge q\}
         \end{matrix}$}
       
\UnaryInfC{$\{p_0\} e_0.m(e_1,...,e_n) \{ \mtagvarX \wedge q\}$}
\DisplayProof
\end{center}
\info{where the $v_i$ are fresh local variables that do not occur in any $e_j$ for all $i,j \in \mathbb{N}_n$.}

\noindent RULE: REC
\begin{center}
\AxiomC{$\begin{matrix}
          A \vdash \{p\} s \{\mtagvarX \wedge q\}, \\
          A' \vdash \{p_i \wedge r_i(z)\} \PL{\textbf{begin local} \self}, \Vector{u_i} := l_i,\Vector{v_i}; s_i \text{ \textbf{end}} \{\mtagvarX_i \wedge q_i\}, i \in \mathbb{N}^1_n \\
          p_i \rightarrow (\mtagfailsafe \rightarrow l_i \not= \Vnull \wedge \mtagtypesafe \rightarrow l_i \not= \Vnull \rightarrow l_i.@\mathbf{c} = \oC_{C_i}), i \in \mathbb{N}^1_n
         \end{matrix}$}
       
\UnaryInfC{$\{p\} s \{\mtagvarX \wedge q\}$}
\DisplayProof
\end{center}
\info{where \textbf{method} $m_i(\Vector{u_i}) \{ s_i \} \in \Method_{C_i}$,
$A = \{ \{p_i\} l_i.m_i(\Vector{v_i}) \{\mtagvarX_i \wedge q_i\} \mid i \in \mathbb{N}^1_n \}$,
$A' = \{ \{p_i \wedge (\mtagterminates \rightarrow \forall z':\typeALnum \bullet r_i(z') \rightarrow z' < z)\} l_i.m_i(\Vector{v_i}) \{\mtagvarX_i \wedge q_i\} \mid i \in \mathbb{N}^1_n \}$,
$z$ is a logical variable of type $\typeALnum$ that does not occur in $p_i,q_i$ and $s_i$ for $i \in \mathbb{N}^1_n$ and
is treated in the proofs as a constant,
$r_i(z)$ for $i \in \mathbb{N}^1_n$ are predicates with $z$ among their free variables such that $\forall \sigma \bullet \sigma \models p_i \rightarrow \exists z':\typeALnum \bullet r_i(z')$
for all $i \in \mathbb{N}^1_n$
and $r_i(z')$ denotes the result of substituting $z'$ for $z$ in $r_i(z)$.}
%

\noindent AXIOM: EQUAL \hfill AXIOM: IS\_A

$\{ true \} \PL{u}_0 == \PL{u}_1 \{ \mtagvarX \wedge \mapbool(\result,\PL{u}_0 = \PL{u}_1) \}$

\hfill $\{ true \} \PL{u}_0 \; is\_a? \; C \{ \mtagvarX \wedge \mapbool(\result,\llbracket \PL{u}_0 \rrbracket \in \{C\}) \}$

\noindent RULE: CNST \hfill AXIOM: NEW
\begin{center}
\AxiomC{$\{p\} \mathbf{new}_C.init(\Vector{e}) \{\mtagvarX \wedge q\}$}
\UnaryInfC{$\{p\} \mathbf{new}\text{ } C(\Vector{e}) \{\mtagvarX \wedge q\}$}
\DisplayProof
\hfill $\{p[\result := \mathbf{new}_C]\} \mathbf{new}_C \{\mtagvarX \wedge p\}$
\end{center}


\ifx \version\extended
\subsection{Auxiliary Rules}
\label{sec:aux_rules}

\noindent RULE: DISJ

\begin{center}
\AxiomC{$\{p\} s \{\mtagvarX \wedge q\}$}
\AxiomC{$\{r\} s \{\mtagvarX \wedge q\}$}
\BinaryInfC{$\{p \vee r\} s \{\mtagvarX \wedge q\}$}
\DisplayProof
\end{center}

\noindent RULE: CONJ

\begin{center}
\AxiomC{$\{p_1\} s \{\mtagvarX \wedge q_1\}$}
\AxiomC{$\{p_2\} s \{\mtagvarY \wedge q_2\}$}
\BinaryInfC{$\{p_1 \wedge p_2\} s \{\mtagvarX \wedge \mtagvarY \wedge q_1 \wedge q_2\}$}
\DisplayProof
\end{center}

\noindent RULE: $\exists$-INT

\begin{center}
\AxiomC{$\{p\} s \{\mtagvarX \wedge q\}$}
\UnaryInfC{$\{\exists x. p\} s \{\mtagvarX \wedge q\}$}
\DisplayProof
\end{center}
where $x \not\in var(\Method) \cup var(s) \cup \free(q)$.

\noindent RULE: INV

\begin{center}
\AxiomC{$\{r\} s \{\mtagvarX \wedge q\}$}
\UnaryInfC{$\{p \wedge r\} s \{\mtagvarX \wedge p \wedge q\}$}
\DisplayProof
\end{center}
where $\free(p) \cap (change(\Method) \cup change(s)) = \emptyset$ and $p$ does not contain quantification
over objects.
\todo{explain $change(\Method) \subseteq \Var_I$}

\noindent RULE: SUBST

\begin{center}
\AxiomC{$\{p\} s \{\mtagvarX \wedge q\}$}
\UnaryInfC{$\{p[\Vector{z} := \Vector{t}]\} s \{\mtagvarX \wedge q[\Vector{z} := \Vector{t}]\}$}
\DisplayProof
\end{center}
where $var(\Vector{z}) \cap (var(\Method) \cup var(s)) = var(\Vector{t}) \cap (change(\Method) \cup change(s)) = \emptyset$.
\else
The auxiliary rules as well as some others are mostly standard and hence omitted. Their tagged versions are given in \cite{extended_version}, however.
\fi

The fact that dyn-expressions have side effects is mirrored in several rules:
Like their corresponding rules in the operational semantics, the usual axiom for assignment is
turned into a rule and the COND and LOOP rules both evaluate the condition before branching on its result in
an intermediate state.

The rules
PASGN, BLCK, METH and REC are needed to handle method calls. After handling side effecting expressions
in arguments beforehand (METH) and ensuring that methods are only called on receivers supporting them (last premise of REC),
method calls are assumed to satisfy the same properties as a block executing the body of the called method
in an environment with local variables suitably initialized by parallel assignment (BLCK,PASGN).

The rules CNST and NEW handle object creation using the respective substitution defined in 
\ifx \version\extended
appendix~\ref{app:substitutions}.
\else
\cite[appendix B]{extended_version}.
\fi

The LOOP and REC rules feature a novel form of loop variants / recursion bound. The basic idea is to
use a predicate $r(z)$ instead of the usual integer expression $t$ in order to allow quantification within
loop variants / recursion bounds. While this was primarily introduced to circumvent a common
incompleteness issue in Hoare logics for total correctness (see proof of Theorem~\ref{thm:comp_rec_meth}
for details), note that it also allows using mapping predicates directly in loop variants / recursion bounds,
i.e. proving
\[\{\mapnum(i)\} \PL{\textbf{while} } i > 0 \PL{ \textbf{do} } i := i - 1 \PL{ \textbf{done}} \{\mtagterminates\}\]
with $r(z) \equiv \mapnum(i,z)$.






\section{Soundness}
\label{sec:soundness}
\ifx \version\extended
\todo{exemplarily proof soundness of some rules against op. Semantics}

Soundness follows from a standard inductive argument. We will only present the case for
the LOOP rule as the idea of using a predicate $r$ as a loop variant for total correctness is novel.

\noindent \textbf{Induction Hypothesis:}
$\vdash \{p\} s \{\mtagvarX \wedge q\} \rightarrow \models \{p\} s \{\mtagvarX \wedge q\}$ for all assertions $p$ and $q$ and all \langDyn{}
statements $s$.

\noindent \textbf{Induction Step:}

\noindent \textbf{Partial Correctness:}
Given $\vdash \{p\} e \{\mtagvarX \wedge p'\}$ and $\vdash \{p' \wedge \mapbool(\result, true)\} s \{\mtagvarX \wedge p\}$,
by the induction hypothesis $\models \{p\} e \{\mtagvarX \wedge p'\}$ and
$\models \{p' \wedge \mapbool(\result, true)\} s \{\mtagvarX \wedge p\}$ follow.

Hence, when executing the program $\PL{\textbf{while} } e \PL{ \textbf{do} } s \PL{ \textbf{done}}$ in a state $\sigma \models p$, the
operational semantics will first apply rule 9 yielding the configuration
$\langle \PL{\textbf{if} } e \PL{ \textbf{then} } s; \PL{\textbf{while} } e \PL{ \textbf{do} } s \PL{ \textbf{done}} \PL{ \textbf{else} null \textbf{end}}, \sigma \rangle$,
then apply whatever rules neccessary to evaluate $\langle e, \sigma \rangle$ to a final configuration $\langle \emptyP, \tau \rangle$.
From $\models \{p\} e \{p'\}$ we can deduce $\tau \models p'$.
Furthermore, the operational semantics uses rules 6-8 to branch on the value of $\tau(\result)$.
Now, for the case of partial correctness, we are only interested in normal program termination,
the cases yielding $\mathbf{fail}$ or $\mathbf{typeerror}$ will be handled below.
Hence there are really only two cases to consider:\newline

\noindent \textbf{1)}$\tau \models \mapbool(\result, true)$: In this case, rule 6a) is the only one applicable
and $\langle s, \tau \rangle$ will be evaluated next. From $\{p' \wedge \mapbool(\result,true)\} s \{p\}$ we can deduce that the
resulting state $\sigma'$ will again satisfy $p$. We are hence again in a configuration
$\langle \PL{while } e \PL{ do } s \PL{ done}, \sigma' \rangle$ with $\sigma' \models p$. When regarding
$\sigma'$ as equivalent to $\sigma$, then this configuration is equivalent to
the one before applying rule 9. Now this loop in the (abstract) transition system raises the possibility of 
divergence. However, for partial correctness we may disregard this possibility, as we only provide guarantees
for finite computations. The case of divergence will be discussed below.

\noindent \textbf{2)}$\tau \models \mapbool(\result, false)$: In this case, rule 6b) is the only one applicable
and $\langle \PL{null}, \tau \rangle$ is the only statement left to evaluate.
Applying rule 1 leaves us in a final configuration
$\langle \emptyP, \tau' \rangle$ with $\tau' \models p'[\result/b] \wedge \mapbool(b,false) \wedge \result = null$.
As this is the only way for our program to terminate normally,
$\models \{p\} \PL{while } e \PL{ do } s \PL{ done} \{p'[\result/b] \wedge \mapbool(b,false) \wedge \result = null\}$
holds.\qed

\noindent \textbf{Termination:}
For partial correctness, the premise $\{p' \wedge \mapbool(\result, true)\} s;e \{p'\}$ can be
derived from the two other premises by an application of the SEQ rule. It hence does not strengthen the
premises in any way. However, for total correctness, it requires an additional predicate $r(z)$ with $z$
among its free variables, such that
$\forall \sigma \bullet \sigma \models p' \rightarrow \exists z':\typeALnum \bullet r(z')$ and
$\{p' \wedge \mapbool(\result, true) \wedge r(z)\} s;e \{p' \wedge \forall z':\typeALnum \bullet r(z') \rightarrow z' < z\}$
hold. $r(z)$ may be understood as mapping states to sets of natural number values for $z$.
The first requirement thus ensures that the ``mapping'' $r(z)$ is (conditionally) total on all states
in $\llbracket p' \rrbracket$,
while the second requires the loop body $s$ together with the condition $e$ to decrease its supremum.
Since the state $\tau$ reached after evaluating $e$ the
first time satisfied $p'$, by the conditional totality of $r(z)$ we deduce that there must be
an ``initial'' non-empty set of natural numbers $Z$ such that for all $z_i \in Z$, $\tau \models r(z_i)$ holds.
Let $z_{max}$ be the supremum of $Z$. Then, since $z_{max}$ is a natural number and since the supremum
of $Z$ is required to strictly decrease on each loop iteration, there must be a finite number of iterations
after which $z_{max} = 0$. Since there is no natural number smaller than zero, there is no way by which
the second requirement for $r(z)$ can be satisfied on the next iteration. Consequently, the loop has to
terminate after finitely many iterations. \qed
\noindent \textbf{Failsafety:}
A failure might occur either in evaluating $e$ or $s$ or by rule 7 when $e$
evaluates to $\Vnull$. Requiring $e$ and $s$ to both be $\mtagfailsafe$ as well as
$\{p\} e \{\result \not= \Vnull\}$ hence covers all these cases. \qed
\noindent \textbf{Typesafety:}
Same argument as for failsafety applies here, only with the requirement
$\result \not= \Vnull \rightarrow \mapbool(\result)$ instead of $\result \not= \Vnull$.
Note that the case for failure is intentionally left open as typesafe partial correctness only
needs to guarantee the absense of type errors and too strong a premise would lead to incompleteness. \qed
\else
Soundness follows from a standard inductive argument. The extended version \cite{extended_version}
further elaborates the case of the LOOP rule.
\fi

\section{Completeness}
\label{sec:completeness}
In this Section, we will prove the axiomatic semantics of \langDyn{} (relative) complete \cite{Cook78} with
respect to its operational semantics following the seminal completeness proof
of Cook and Gorelick \cite{Cook78,Gorelick75} as well as its extension to OO-programs
due to de Boer and Pierik \cite{DeBoerPierik2003}. That is, given a closed program $\pi$ with a finite set of
class definitions, we prove that $\vDash \{p\} \pi \{q\}$ implies $\vdash_{\PSHL, \PSAL} \{p\} \pi \{q\}$
assuming a complete proof system $\PSAL$ for the assertion language \langAL.

Traditionally, completeness proofs are structured into 3 steps. First, the assertion language is shown
to be expressive, then the system is proven complete for all statements of the programming language and finally,
it is shown to be complete for recursive methods using the concept of most general correctness formulas. Since
both the first and the last step rely on techniques for ``freezing'' program states and for evaluating assertions
on such frozen states,
we follow \cite{DeBoerPierik2003} in prepending a step
for developing adequate freezing techniques for \langDyn.

Completeness proofs for Hoare Logics have been extended and refined for several decades now. Unfortunately,
due to space restrictions we will not be able to give a proper account to the numerous ideas and intriguing
details in the works of our predecessors, but must assume a certain familiarity with such proofs on the
side of the reader.
For the same reason, we will not be able to present the proof as a whole, but will concentrate on those parts 
we had to adapt. 

\subsection{Freezing the Initial State}

As noticed by Gorelick \cite{Gorelick75}, achieving completeness requires that the assertion language is able to
capture every aspect of a program state in logical variables, in order to ``freeze'' this information during
program execution
and allow the postcondition to compare the initial- to the final state. Pierik and de Boer
\cite{DeBoerPierik2003} pointed out that in OO-contexts this additionally
requires freezing the internal states of all objects existing in the state, necessitating a more sophisticated
freezing-strategy.

While their approach stores objects and the values of their instance variables class-wise, which is difficult
in a dynamically-typed language like \langDyn, the basic idea is fortunately still applicable. 
We use a logical variable $obj$ of type $\typeALobj^*$ to store a (finite) sequence of all existing objects:

\vspace{0.1cm}
\hfill $all(obj) \equiv \forall o: \typeALobj \bullet \exists i: \typeALnum \bullet i < |obj| \wedge obj[i] = o$ \hfill
\vspace{0.1cm}

Since $obj$ establishes a bijection from natural numbers to objects, its allows encoding
states as sequences of natural numbers.
For convenience, we introduce a polymorphic\footnote{We use the polymorphic version for the sake of
readability although the type system of \langAL{} does not allow polymorphism.
However, polymorphic functions can be emulated using one version for each element type.} $pos$ function satisfying
$ \forall \tau : \BaseTypeX \bullet \forall e:\tau, s: \tau^* \bullet s[pos(s, e)] = e $

%
%
We introduce an enumeration $ivar: \IVar^*$ of all instance variables and
define the following predicate for freezing states:

\vspace{0.1cm}

$code(\overline{x}, obj, \varsigma) \equiv |\varsigma| = |ivar| + 1 \wedge |\varsigma[0]| = |\overline{x}| \wedge obj[0] = \uninitialized \wedge $

\hspace{2cm}$\forall i:\typeALnum \bullet i < |\overline{x}| \rightarrow \varsigma[0][i] = pos(obj, x_i) \wedge$

\hspace{2cm}$\forall i,j: \typeALnum \bullet (i < |ivar| \wedge j < |obj|) \rightarrow ivar[i] = @v \wedge obj[j] = o \rightarrow \varsigma[i+1][j] = pos(obj, o.@v)$

\vspace{0.1cm}

\noindent
where $\overline{x} = x_1,...,x_n$ is a sequence of local variables. The predicate $code(\overline{x}, obj, \varsigma)$ uses
the sequence $obj$ to capture the state of all local variables in $\overline{x}$ as well as all objects in $obj$ in the frozen
state $\varsigma$ of type $(\typeALnum^*)^*$.
Note that $\varsigma$ can capture the internal states of all existing objects without referencing any of them.
%

Also note that this is indeed satisfiable for all states as $\uninitialized \in \typeALobj$ and $\uninitialized \in obj$.
Furthermore, we say that $\varsigma$ encodes $\sigma$ and write
\[\sigma \sim \varsigma \text{ iff } \sigma \models \exists obj: \typeALobj^* \bullet all(obj) \wedge code(\overline{x},obj,\varsigma) \]
with $\{ \overline{x} \} = \LVar \cup \SVar$.

\begin{lemma}[Left-Totality of $\sim$]\label{lem:sim_lt}
 $\forall \sigma: \Sigma \bullet \exists \varsigma: (\typeALnum^*)^* \bullet \sigma \sim \varsigma$.
\end{lemma}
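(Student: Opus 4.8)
The plan is to prove left-totality \emph{constructively}: for an arbitrary state $\sigma$ I exhibit an explicit witness sequence $obj$ together with a frozen state $\varsigma$ and verify that $\sigma \models all(obj) \wedge code(\overline{x},obj,\varsigma)$, which by definition gives $\sigma \sim \varsigma$. The observation that makes this possible is that the object domain of any state is finite: only finitely many objects are active (created) in $\sigma$, and to these we add the class representatives $\oC_C$ (finitely many, since the program has finitely many classes) together with the two distinguished objects $\Vnull$ and $\uninitialized$. I would therefore choose a finite enumeration $obj$ of exactly this domain, fixing index $0$ for $\uninitialized$ so that $obj[0] = \uninitialized$. By construction $obj$ lists every object, so $all(obj)$ holds, and consequently $pos(obj, o)$ is a well-defined natural number for every object $o$ that occurs as a variable value or field value in $\sigma$.

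Given this $obj$, I define $\varsigma$ of type $(\typeALnum^*)^*$ row by row so as to match each conjunct of $code$. Row $0$ has length $|\overline{x}|$ (where $\{\overline{x}\} = \LVar \cup \SVar$) and records the local and special variables: $\varsigma[0][i] := pos(obj, \sigma(x_i))$ for $i < |\overline{x}|$. For each $i < |ivar|$, row $i+1$ has length $|obj|$ and records the $i$-th instance variable across all objects: writing $@v = ivar[i]$ and $o = obj[j]$, I set $\varsigma[i+1][j] := pos(obj, \sigma(o.@v))$. With these choices $|\varsigma| = |ivar|+1$ and $|\varsigma[0]| = |\overline{x}|$ hold immediately, and the two universally quantified conjuncts of $code$ hold by construction, using that $pos(obj,\cdot)$ returns a valid index whenever its argument lies in $obj$. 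Every stored entry is a natural number, so $\varsigma$ is genuinely of type $(\typeALnum^*)^*$; instantiating the existential quantifier $\exists obj$ with the chosen sequence then yields $\sigma \sim \varsigma$.

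The step I expect to require the most care is the finiteness of the object domain, since the lemma quantifies over \emph{all} $\sigma \in \Sigma$: a finite sequence cannot enumerate an infinite domain, so I must argue that every state of interest has only finitely many existing objects, which is where the witness $obj$ satisfying $all(obj)$ comes from. The second delicate point is well-definedness of the encoding on uninitialized values: a field $o.@v$ may equal $\uninitialized$, yet the encoding must still assign it a natural-number position. This is exactly why $\uninitialized$ is placed into $obj$ at index $0$ --- it guarantees that $pos(obj,\uninitialized)$ is defined, so that $\varsigma$ can capture the internal state of every object, including its uninitialized fields, without any special-casing (and without $\varsigma$ itself referring to any object). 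Once these two points are settled, the remaining verification that the chosen $\varsigma$ satisfies each conjunct of $code$ is routine index bookkeeping over $i$ and $j$.
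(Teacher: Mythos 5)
Your construction is correct and is essentially the argument the paper intends: the paper states Lemma~\ref{lem:sim_lt} without an explicit proof, offering only the preceding remark that the encoding is satisfiable in every state because $\uninitialized \in \typeALobj$ and $\uninitialized \in obj$, and your explicit witness (a finite enumeration $obj$ of the finitely many existing objects with $obj[0] = \uninitialized$, together with the row-by-row definition of $\varsigma$ via $pos$) simply fleshes out that observation. You also correctly isolate the two load-bearing points --- finiteness of the set of existing objects in any state $\sigma \in \Sigma$ and well-definedness of $pos$ on values that may be $\uninitialized$ --- which are exactly what the paper's remark appeals to.
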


Finally, we are ready to define a predicate transformer $\Theta$ (called the ``freezing function'' in \cite{DeBoerPierik2003}).
However, while in their work, $\Theta$ also bounds all quantification and replaces instance variable dereferencing
by lookups in sequences, we additionally translate all object expressions into expressions of type $\typeALnum$
to allow simulating computations directly on the frozen states.

We hence have the following main cases for our predicate transformer $\Theta^{\overline{x}}_{obj}(\varsigma)$:
\begin{itemize}
 \item $(l.@v)\Theta^{\overline{x}}_{obj}(\varsigma) \equiv \varsigma[pos(ivar,@v)+1][l\Theta^{\overline{x}}_{obj}(\varsigma)]$
 \item $\PL{u}\Theta^{\overline{x}}_{obj}(\varsigma) \equiv \varsigma[0][pos(\overline{x}, \PL{u})]$ \info{where $\PL{u}$ is a program variable in $\overline{x}$}
 \item $u\Theta^{\overline{x}}_{obj}(\varsigma) \equiv u'$ \info{where $u$ is a logical variable of type $\typeALobj$ and $u'$ is a fresh logical variable of type $\typeALnum$}
 \item $(l_1 = l_2)\Theta^{\overline{x}}_{obj}(\varsigma) \equiv l_1\Theta^{\overline{x}}_{obj}(\varsigma) = l_2\Theta^{\overline{x}}_{obj}(\varsigma)$ \info{where $l_1$ and $l_2$ are of type $\typeALobj$.}
 \item $(\exists o:\typeALobj \bullet p)\Theta^{\overline{x}}_{obj}(\varsigma) \equiv (\exists o':\typeALnum \bullet 0 \le o' < |obj| \rightarrow p\Theta^{\overline{x}}_{obj}(\varsigma))$
\end{itemize}

$\Theta^{\overline{x}}_{obj}(\varsigma)$ transforms any assertion $p$ in such a way that it operates
on the frozen state $\varsigma$ instead of the real program variables. Like the $\Theta$ in \cite{DeBoerPierik2003}, our
$\Theta^{\overline{x}}_{obj}(\varsigma)$ hence satisfies the following property

\begin{theorem}[Invariance]\label{thm:freezing}
 $\vdash \{p\Theta^{\overline{x}}_{obj}(\varsigma)\} s \{p\Theta^{\overline{x}}_{obj}(\varsigma)\}$
for all statements $s$ and assertions $p$ as long as $\overline{x}$ contains all program variables used and
$obj$ contains all objects accessed in $p$.
\end{theorem}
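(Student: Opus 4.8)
The plan is to exploit that $\Theta^{\overline{x}}_{obj}(\varsigma)$ translates away \emph{all} dependence of an assertion on the current program state: after the transformation, $p\Theta^{\overline{x}}_{obj}(\varsigma)$ is a pure statement about the logical variables $\varsigma$ and $obj$, about the freshly introduced logical variables of type $\typeALnum$, and about constants, with no occurrence of any program variable and no quantification ranging over objects. Since a \langDyn{} statement can only alter program variables and never the logical variables used to freeze the state, such an assertion is trivially an execution invariant. Proof-theoretically, this is exactly what the INV rule delivers once it is run over a body that carries no information, so the work is to verify its side conditions.

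First I would show, by structural induction on the assertion $p$ and using the hypotheses that $\overline{x}$ contains every program variable occurring in $p$ and $obj$ every object accessed by $p$, the two structural facts
\[ \free(p\Theta^{\overline{x}}_{obj}(\varsigma)) \cap \Var = \emptyset \qquad \text{and} \qquad p\Theta^{\overline{x}}_{obj}(\varsigma) \text{ contains no quantifier ranging over } \typeALobj . \]
The interesting clauses are precisely those in which $\Theta$ eliminates state- or object-dependence: a local or special program variable $\PL{u}$ becomes the frozen lookup $\varsigma[0][pos(\overline{x},\PL{u})]$ (well defined since $\PL{u} \in \{\overline{x}\}$), an instance-variable access $l.@v$ becomes $\varsigma[pos(ivar,@v)+1][l\Theta^{\overline{x}}_{obj}(\varsigma)]$, an object-typed \emph{logical} variable $u$ is renamed to a fresh $\typeALnum$-variable $u'$, and an object quantifier $\exists o:\typeALobj \bullet (\cdot)$ is rewritten to the bounded numeric quantifier $\exists o':\typeALnum \bullet 0 \le o' < |obj| \rightarrow (\cdot)$. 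The remaining clauses (equality, the boolean connectives, constants, and the arithmetic, boolean and string operations) are purely compositional and preserve both invariants by the induction hypothesis. The coverage hypotheses on $\overline{x}$ and $obj$ are exactly what guarantee that no program variable and no object quantifier survives the translation.

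With these two facts in hand the derivation is immediate, provided we stay in partial correctness (empty tag set $\mtagvarX=\emptyset$). Indeed, $\vdash \{true\}\, s\, \{true\}$ holds for every statement $s$ by a routine structural induction on $s$: method calls are discharged through the REC rule of the closed program, and with the empty tag set all the $\mtagterminates$-, $\mtagfailsafe$- and $\mtagtypesafe$-guarded conjuncts occurring in COND and LOOP are vacuous. (Restricting to partial correctness is essential here, since $\{true\}\, s\, \{true\}$ would fail for, e.g., the $\mtagterminates$ tag.) Instantiating INV with this triple as premise and with the carried invariant $p \defined p\Theta^{\overline{x}}_{obj}(\varsigma)$ and $r \defined q \defined true$ yields $\{p\Theta^{\overline{x}}_{obj}(\varsigma) \wedge true\}\, s\, \{p\Theta^{\overline{x}}_{obj}(\varsigma) \wedge true\}$; its side conditions $\free(p\Theta^{\overline{x}}_{obj}(\varsigma)) \cap (\change(\Method) \cup \change(s)) = \emptyset$ and the absence of object quantification are precisely the two facts above, the first because $\change(\Method) \cup \change(s) \subseteq \Var$ while $\free(p\Theta^{\overline{x}}_{obj}(\varsigma))$ consists solely of logical variables. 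A final application of CONS discards the trivial conjuncts and gives the claim.

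The main obstacle is the bookkeeping of the induction in the first step: one must check that $\Theta$ genuinely leaves no program variable behind — including the special variables $\self$ and $\result$, which are covered because $\{\overline{x}\} \supseteq \Var_S$ — and, crucially, that every object quantifier is turned into a bounded numeric one, since this single clause is what furnishes the second INV side condition. Ensuring the coverage hypotheses on $\overline{x}$ and $obj$ propagate correctly through the recursion, in particular through instance-variable accesses $l.@v$ where the subexpression $l$ must itself be translated, is where the care lies; once the two structural facts are in place, the applications of INV and CONS are entirely mechanical.
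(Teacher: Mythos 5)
Your proposal is correct and coincides with the argument the paper leaves implicit: the paper states the Invariance theorem without proof (appealing to the analogous property of de Boer--Pierik's freezing function), and the intended justification is exactly your two structural facts --- that $p\Theta^{\overline{x}}_{obj}(\varsigma)$ contains no program variables and no quantification over objects --- which match the side conditions of the auxiliary rule INV verbatim (the ban on object quantification being needed precisely because $\mathbf{new}$ enlarges the range of object quantifiers, which $\Theta$'s bounding by $|obj|$ neutralizes). Your extra ingredient, derivability of $\{true\}\,s\,\{true\}$ under the empty tag set with method calls discharged via REC (taking $p_i \equiv q_i \equiv true$ and $r_i(z) \equiv z = z$), is sound and is the natural way to obtain a premise for INV, up to the marginal corner case of a call to a method no class of the closed program defines --- a case the paper's own completeness machinery would equally have to exclude.
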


It can hence replace $\Theta$ in the remaining argument. Note that
$p\Theta^{\overline{x}}_{obj}(\varsigma)$ is a property of $\varsigma$ as its truth value is independent of
any particular state. We hence write $\models p\Theta^{\overline{x}}_{obj}(\varsigma)$ if its truth value is true.
Also observe

\begin{lemma}[Freezing]\label{lem:theta}
 $\sigma \models q \hspace{0.5cm}\text{ iff }\hspace{0.5cm} \sigma \sim \varsigma \wedge \models q\Theta^{\overline{x}}_{obj}(\varsigma)$
\end{lemma}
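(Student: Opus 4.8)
The plan is to prove the stated equivalence by structural induction on $q$, treating the encoding $\sigma \sim \varsigma$ as a standing hypothesis rather than something to be derived from $\sigma \models q$; by Lemma~\ref{lem:sim_lt} such a $\varsigma$ always exists, so the genuine content is the biconditional $\sigma \models q \iff\; \models q\Theta^{\overline{x}}_{obj}(\varsigma)$ whenever $\sigma \sim \varsigma$. First I would fix one sequence $obj$ witnessing the existential hidden in $\sigma \sim \varsigma$, so that $all(obj) \wedge code(\overline{x},obj,\varsigma)$ holds in $\sigma$. As the accompanying remark notes, $obj$ establishes a bijection between positions and objects; I would keep this (injectivity together with the surjectivity supplied by $all(obj)$) explicit, since both halves are used below. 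Because $\Theta$ rewrites object expressions into natural-number expressions, the induction cannot be stated on closed assertions alone, so I would strengthen it to range over a valuation of the logical variables.

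Concretely, given a valuation $\gamma$ of the logical variables (folded into the state-access $\sigma(\cdot)$ as in the paper's convention), define $\gamma'$ by sending each object-typed logical variable $u$ to the position $pos(obj,\gamma(u))$ carried by its fresh numeric image $u'$, leaving numeric and sequence variables fixed. The two mutually-inductive claims are: (i) for every object-typed logical expression $l$, the number denoted by $l\Theta^{\overline{x}}_{obj}(\varsigma)$ under $\gamma'$ equals $pos(obj,\sigma(l))$; and (ii) for every assertion $a$, $\sigma \models a$ iff $a\Theta^{\overline{x}}_{obj}(\varsigma)$ holds under $\gamma'$. Claim (i) is what powers the dereference clause: unfolding $code$ gives $\varsigma[pos(ivar,@v)+1][j] = pos(obj,\,obj[j].@v)$, so evaluating $(l.@v)\Theta^{\overline{x}}_{obj}(\varsigma)$ at the index $j = pos(obj,\sigma(l))$ returns $pos(obj,\sigma(l).@v) = pos(obj,\sigma(l.@v))$. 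The base cases — local variables through the row-$0$ clause $\varsigma[0][i] = pos(obj,x_i)$, the constants $\Vnull$, $\uninitialized$ (position $0$, via $obj[0] = \uninitialized$), $\self$, and the representatives $\oC_C$ — are direct reads of $code$, and the clauses for operations $op(\Vector{l})$, equality of object terms, and the boolean connectives are routine once (i) is available, equality being faithful precisely because injectivity of $obj$ makes $pos(obj,\cdot)$ a canonical index.

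The delicate case, which I expect to be the main obstacle, is quantification over objects: the clause replaces unrestricted $\exists o:\typeALobj$ by quantification over positions $o'$ with $0 \le o' < |obj|$. Soundness here rests on both properties of the witness at once — surjectivity ($all(obj)$) so that ranging $o'$ over $[0,|obj|)$ and reading $obj[o']$ reaches exactly the objects quantified over on the semantic side, and injectivity so that each such witness has a single canonical image and no spurious collisions arise. I would prove the step by transferring a semantic witness object $o$ to the position $o' = pos(obj,o)$ (in range by $all(obj)$), and conversely reading back $obj[o']$ for a position witness, then applying the induction hypothesis with $\gamma$ extended by $o \mapsto obj[o']$ and $\gamma'$ by $o' \mapsto pos(obj,obj[o'])$; the point requiring care is checking that this extended pair still respects the $\gamma$-to-$\gamma'$ relation so that claim (i) continues to hold for the newly bound variable. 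The universal case follows by duality. Finally, since $q\Theta^{\overline{x}}_{obj}(\varsigma)$ contains no program or instance variables, its truth value does not depend on the program state (the content underlying Theorem~\ref{thm:freezing}), which legitimises writing $\models q\Theta^{\overline{x}}_{obj}(\varsigma)$ without a state and completes the equivalence.
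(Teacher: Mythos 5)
Your proposal is correct and takes essentially the same route as the paper: the paper's entire proof of Lemma~\ref{lem:theta} is the one-liner ``by induction over the structure of $q$'', and your argument is exactly that induction spelled out in full --- the strengthening to valuations to handle the fresh numeric variables $u'$, the $pos$-based claim (i) for object-typed expressions driving the dereference and equality cases, and the bounded-quantifier case resting on $all(obj)$ are precisely the details the paper leaves implicit. Your appeal to injectivity of $obj$ also matches the paper's own reading (it explicitly treats $obj$ as ``a bijection from natural numbers to objects''), so the delicate point you flag in the quantifier case is resolved exactly as intended.
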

\ifx \version\extended
\begin{proof}
 By induction over the structure of $q$.
\end{proof}
\fi

\subsection{Expressivity}

Cook \cite{Cook78} first discussed the importance of an expressive assertion language for the completeness of
a Hoare logic. In essence, the assertion language must be able to express the strongest postcondition $SP(s,p)$
for all statements $s$ and preconditions $p$.

In the last Section, we already established that it is possible to capture all information about a state in
a structure consisting of finite sequences of natural numbers. Using G\"odelization,
one can take this a step further and encode these sequences themselves as a single natural number. Then,
we consider a predicate $comp_s$ of type $\typeALnum \times \typeALnum \mapsto \typeALnum$ simulating \langDyn{}
computations on such frozen states and note that,
since such computations are by definition computable, it can be defined as a $\mu$-recursive function.

By
\ifx \version\extended
Theorem~\ref{thm:mu_rec_in_ass}
\else
\cite[Theorem 6]{extended_version}
\fi, it is hence expressible in our assertion language and we can use it
within our assertions without any loss of generality. For convenience, we will omit the G\"odelization step
and instead use a version of $comp_s$ operating on frozen states as defined above.
To formalize the idea that $comp_s$ simulates \langDyn{} computations on frozen states, we stipulate

\begin{lemma}\label{lem:comp_s}
 $comp_s = \{(\varsigma, \varsigma') \mid \forall \sigma, \sigma' \bullet (
      \sigma \sim \varsigma \wedge \sigma' \sim \varsigma') \rightarrow \sigma' \in \semantics\llbracket s \rrbracket(\sigma)
      \}$
\end{lemma}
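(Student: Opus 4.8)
The plan is to read this lemma as the adequacy specification for the $\mu$-recursive function $comp_s$, which was constructed precisely to simulate \langDyn{}'s structural operational semantics on frozen states. Since the right-hand side characterizes $comp_s$ as the relation pairing $\varsigma$ with $\varsigma'$ exactly when every concrete state encoded by $\varsigma$ reaches (via $s$) some concrete state encoded by $\varsigma'$, I would prove the equation by first establishing a single-step correspondence between $\step$ and the one-step frozen transition underlying $comp_s$, and then lifting it to whole computations by induction on their length. Throughout, the bridge between the concrete and frozen worlds is the Freezing Lemma (Lemma~\ref{lem:theta}), $\sigma \models q$ iff $\sigma \sim \varsigma \wedge \models q\Theta^{\overline{x}}_{obj}(\varsigma)$, together with left-totality of $\sim$ (Lemma~\ref{lem:sim_lt}).

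First I would prove a single-step simulation lemma: if $\langle s,\sigma\rangle \step \langle s',\sigma'\rangle$ and $\sigma \sim \varsigma$, then there is a $\varsigma'$ with $\sigma' \sim \varsigma'$ such that the frozen step function underlying $comp_s$ maps the code of $\langle s,\varsigma\rangle$ to that of $\langle s',\varsigma'\rangle$, and conversely that every frozen step arises this way. This is a case analysis over the rules of Figure~\ref{fig:dyn_op_sem}; for each rule one checks that its effect on $\sigma$ (reading and writing local and instance variables, branching on $\mathit{bool}(\cdot)$, sequencing) is reflected by the corresponding manipulation of the sequence-of-sequences $\varsigma$, using the defining clauses of $\Theta^{\overline{x}}_{obj}(\varsigma)$ that replace variable and instance-variable accesses by lookups in $\varsigma$. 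Because $\varsigma$ records object identities only through their positions under $obj$, this step also shows that the frozen transition depends on $\varsigma$ alone and not on the particular $\sigma \sim \varsigma$; i.e. the simulation is well-defined on $\sim$-equivalence classes (states taken up to object renaming).

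Next I would lift the correspondence to multi-step computations. The operational semantics is deterministic up to the choice of freshly created objects, so for fixed $\sigma$ the set $\semantics\llbracket s \rrbracket(\sigma)$ is either empty (divergence) or an isomorphism class of final states all encoded by one and the same $\varsigma'$. Consequently the universal quantifier over $\sigma,\sigma'$ on the right-hand side is exactly the assertion that the induced frozen relation is invariant under the choice of encoding. An induction on the number of computation steps then yields both inclusions: from $(\varsigma,\varsigma') \in comp_s$, unwinding the iterated frozen step and applying the single-step lemma in reverse reconstructs a computation $C_0,\dots,C_n \in Comp(s,\sigma)$ ending in $\langle \emptyP,\sigma'\rangle$ with $\sigma' \sim \varsigma'$; conversely, any terminating computation from a $\sigma \sim \varsigma$ is mirrored step by step by $comp_s$, forcing $comp_s(\varsigma)=\varsigma'$.

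The main obstacle I expect is the treatment of object creation. The rule for $\mathbf{new}_C$ allocates $\creation(\sigma)$, the least-indexed inactive object of the concrete state, whereas the frozen state abstracts identities away into positions of $obj$; I must show that $comp_s$ selects a fresh position consistently, so that the allocation invariance required for well-definedness on $\sim$-classes survives object creation, and that $obj$ always provides enough inactive entries (here the facts that $\uninitialized \in \typeALobj$ and $\uninitialized \in obj$, and that codes may be extended, are used). A secondary subtlety is that not every $\varsigma' \in (\typeALnum^*)^*$ encodes a state, so the right-hand side is vacuously satisfied by ``junk'' second components; I would address this by noting that $comp_s$, as a function, is only ever applied to and only ever returns well-formed codes, so the claimed equality is intended, and subsequently used in the completeness argument, over the well-formed fragment on which $\sim$ is surjective onto $\Sigma$.
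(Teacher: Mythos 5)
The first thing to say is that the paper has no proof of this lemma to compare against: it is introduced with the words ``we stipulate'', i.e.\ it functions as the \emph{defining specification} of $comp_s$, and the only obligation the authors discharge is existence --- since \langDyn{} computations on (G\"odelized) frozen states are computable, a $\mu$-recursive predicate satisfying the stipulation exists by an appeal to the Church--Turing thesis, and is then expressible in \langAL{} via Theorem~\ref{thm:mu_rec_in_ass} (Lemma~\ref{lem:trans_mu_rec_to_ass}). Your proposal therefore supplies exactly the verification the paper deliberately elides: fix a concrete step function on codes, prove a single-step simulation lemma by case analysis over Figure~\ref{fig:dyn_op_sem}, and lift it by induction on computation length, with Lemma~\ref{lem:theta} and Lemma~\ref{lem:sim_lt} bridging the concrete and frozen worlds. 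That is the right shape for such a verification, identifying the allocation case as the crux is correct, and what your route buys is a constructive replacement for the informal computability appeal --- at the price of a long case analysis the authors chose to avoid.

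There is, however, one genuine flaw in your sketch, and it touches a wrinkle in the lemma as literally stated. \langDyn{}'s semantics is \emph{deterministic} --- $\creation$ is a function of the state --- so $\semantics\llbracket s \rrbracket(\sigma)$ is at most a singleton of concrete final states, not, as you write, ``an isomorphism class of final states''. At the same time a code $\varsigma'$ generically encodes \emph{many} distinct states: object identities enter the code only through positions under the existentially quantified enumeration $obj$, so renaming ordinary objects (fixing $\Vnull$, $\uninitialized$ and the representatives $\oC_C$ appropriately) yields distinct states with the same code. Consequently, if $\sigma'_1 \neq \sigma'_2$ both satisfy $\sigma'_i \sim \varsigma'$ and some $\sigma \sim \varsigma$ has a terminating computation, the right-hand side's universal quantifier would force both $\sigma'_1$ and $\sigma'_2$ into the singleton $\semantics\llbracket s \rrbracket(\sigma)$ --- impossible. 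So the displayed set equality can hold only vacuously, and your step-by-step simulation establishes not it but the intended reading modulo object renaming (for every $\sigma \sim \varsigma$ there \emph{exists} $\sigma' \sim \varsigma'$ with $\sigma' \in \semantics\llbracket s \rrbracket(\sigma)$, and conversely); this mismatch is plausibly why the authors stipulate rather than prove the lemma. To make your argument go through verbatim, restate the specification in that $\forall\exists$ form (or quotient states by renaming before encoding); your invariance-under-encoding observation is then exactly the lemma you need, rather than a side remark.
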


Using $comp_s$ we can show the following:

\begin{theorem}[Definability of Weakest Preconditions]\label{thm:def_wp}
 For all postconditions $q$ and statements $s$, the precondition

\vspace{0.1cm}
 $p \equiv \forall \varsigma, \varsigma' \bullet (all(obj) \wedge code(\overline{x}, obj, \varsigma') \wedge comp_s(\varsigma', \varsigma)) \rightarrow q\Theta^{\overline{x}}_{obj}(\varsigma)$
\vspace{0.1cm}

 satisfies $\llbracket p \rrbracket = \{ \sigma \mid \semantics\llbracket s \rrbracket(\sigma) \subseteq \llbracket q \rrbracket \}$.
\end{theorem}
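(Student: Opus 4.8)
The plan is to prove the stated set equality by fixing a state $\sigma$ and showing $\sigma \models p$ iff $\semantics\llbracket s \rrbracket(\sigma) \subseteq \llbracket q \rrbracket$, reducing each side to statements about states through the three preceding lemmas. The connective tissue is threefold: the conjunct $all(obj) \wedge code(\overline{x}, obj, \varsigma')$ is exactly the body of the existential defining $\sigma \sim \varsigma'$; Lemma~\ref{lem:comp_s} moves between the frozen relation $comp_s$ and the semantic relation $\semantics\llbracket s \rrbracket$; and the Freezing lemma (Lemma~\ref{lem:theta}) moves between $\models q\Theta^{\overline{x}}_{obj}(\varsigma)$ and $\sigma' \models q$ for any state encoded by $\varsigma$. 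Throughout, I would keep the enumeration $obj$ fixed across $\varsigma'$ and $\varsigma$, as the statement of $p$ dictates, so that the initial and final states are frozen in a common coordinate system.

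For the inclusion $\{\sigma \mid \semantics\llbracket s \rrbracket(\sigma) \subseteq \llbracket q \rrbracket\} \subseteq \llbracket p \rrbracket$, I would assume $\semantics\llbracket s \rrbracket(\sigma) \subseteq \llbracket q \rrbracket$ and verify the universally quantified implication in $p$. Fixing witnesses $obj, \varsigma', \varsigma$ satisfying the antecedent, the conjunct $all(obj) \wedge code(\overline{x}, obj, \varsigma')$ yields $\sigma \sim \varsigma'$. Since $comp_s$ is the graph of the ($\mu$-recursive) computation function on frozen states, the mere existence of a successor $\varsigma$ guarantees that $\varsigma$ genuinely encodes a final state, so I may pick $\sigma'$ with $\sigma' \sim \varsigma$. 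Applying Lemma~\ref{lem:comp_s} to $comp_s(\varsigma', \varsigma)$, $\sigma \sim \varsigma'$ and $\sigma' \sim \varsigma$ gives $\sigma' \in \semantics\llbracket s \rrbracket(\sigma)$, whence $\sigma' \models q$ by the hypothesis. The Freezing lemma then turns $\sigma' \models q$ together with $\sigma' \sim \varsigma$ into $\models q\Theta^{\overline{x}}_{obj}(\varsigma)$, which is the consequent.

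For the reverse inclusion I would assume $\sigma \models p$, take an arbitrary $\sigma' \in \semantics\llbracket s \rrbracket(\sigma)$, and aim at $\sigma' \models q$. Using Left-Totality (Lemma~\ref{lem:sim_lt}) I would choose encodings $\varsigma'$ of $\sigma$ and $\varsigma$ of $\sigma'$ over a common $obj$, so that $all(obj) \wedge code(\overline{x}, obj, \varsigma')$ holds. The crucial step is to discharge the remaining antecedent $comp_s(\varsigma', \varsigma)$, and this is where I expect the main obstacle: establishing membership in the $comp_s$ relation of Lemma~\ref{lem:comp_s} is a universally quantified statement over \emph{all} states encoded by $\varsigma'$ and $\varsigma$, not merely the pair $\sigma, \sigma'$ at hand. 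I would close this by appealing to $comp_s$ being a function on frozen states whose value is forced to be the encoding of the actual final state, so that $\varsigma = comp_s(\varsigma')$ and the universal holds because the encoding determines the state up to the object-relabelling equivalence respected by $\semantics\llbracket s \rrbracket$. Once $comp_s(\varsigma', \varsigma)$ is available, instantiating $p$ gives $\models q\Theta^{\overline{x}}_{obj}(\varsigma)$, and the Freezing lemma with $\sigma' \sim \varsigma$ yields $\sigma' \models q$, completing the argument.
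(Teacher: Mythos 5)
Your proposal is correct and follows essentially the same route as the paper's proof: both inclusions are established by combining Left-Totality (Lemma~\ref{lem:sim_lt}), the characterization of $comp_s$ (Lemma~\ref{lem:comp_s}), and the Freezing Lemma (Lemma~\ref{lem:theta}). Where you go beyond the paper, you merely make explicit two properties it uses tacitly --- that $comp_s$, being the graph of the simulating function, only relates genuinely decodable frozen states, and that membership in $comp_s$ for the particular pair $(\varsigma',\varsigma)$ follows because encodings determine states up to an object relabelling respected by $\semantics\llbracket s \rrbracket$ --- so your treatment is, if anything, slightly more careful at exactly the steps the paper glosses.
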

\todo{do we need to also introduce predicates typeerror(), fail() and diverge()?}

\ifx \version\extended
The proof can be found in appendix~\ref{app:omitted_proofs}.
\else
The proof can be found in \cite{extended_version}.
\fi
Since definability of weakest preconditions is equivalent to the definability of strongest postconditions
\cite{Olderog83}, we have

\begin{theorem}[Expressiveness]
 The assertion language \langAL{} is expressive with respect to its standard interpretation and the programming language \langDyn{}.
\end{theorem}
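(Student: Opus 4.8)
The plan is to obtain the theorem as a direct corollary of the Definability of Weakest Preconditions (Theorem~\ref{thm:def_wp}). Recall that, in the sense of Cook \cite{Cook78}, \langAL{} is \emph{expressive} for \langDyn{} exactly when for every statement $s$ and every precondition $p$ the strongest postcondition $\{\tau \mid \exists \sigma \bullet \sigma \models p \wedge \tau \in \semantics\llbracket s \rrbracket(\sigma)\}$ is the extension of some assertion of \langAL{}. Theorem~\ref{thm:def_wp} already furnishes, for every $q$ and $s$, an explicit assertion whose extension is the weakest precondition $\{\sigma \mid \semantics\llbracket s \rrbracket(\sigma) \subseteq \llbracket q \rrbracket\}$; thus \langAL{} defines all weakest preconditions, and it remains only to pass from weakest preconditions to strongest postconditions.

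First I would invoke the classical fact that, over an assertion language closed under negation, conjunction and first-order quantification, definability of weakest preconditions and definability of strongest postconditions are equivalent \cite{Olderog83}. Before applying it I would check its two hypotheses in the present setting: that \langAL{} supplies the required logical connectives — it does, being full weak second-order logic — and that the input--output relation $\semantics\llbracket \cdot \rrbracket$ used in Theorem~\ref{thm:def_wp} is precisely the (untagged, partial-correctness) relation against which expressiveness is measured. With both in place, Theorem~\ref{thm:def_wp} yields definability of strongest postconditions, which is the definition of expressiveness, and the theorem follows.

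To keep the argument self-contained I would, as a backup, also exhibit the strongest postcondition directly by dualising the construction of Theorem~\ref{thm:def_wp}: using the simulating relation $comp_s$ of Lemma~\ref{lem:comp_s} together with the freezing predicates, take
\[ q \equiv \exists \varsigma' \bullet all(obj) \wedge code(\overline{x}, obj, \varsigma) \wedge p\Theta^{\overline{x}}_{obj}(\varsigma') \wedge comp_s(\varsigma', \varsigma), \]
where $\varsigma$ encodes the current (now final) state and $\varsigma'$ ranges over candidate initial states; Lemmas~\ref{lem:theta} and~\ref{lem:comp_s} then give $\llbracket q \rrbracket = \{\tau \mid \exists \sigma \bullet \sigma \models p \wedge \tau \in \semantics\llbracket s \rrbracket(\sigma)\}$, avoiding the external citation. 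I do not expect a genuine obstacle at this stage: the substantive work — expressibility of $comp_s$ as a $\mu$-recursive function, the object-aware freezing machinery, and its Invariance (Theorem~\ref{thm:freezing}) — is all upstream of, and assumed by, Theorem~\ref{thm:def_wp}. The only point demanding mild care is the bookkeeping of the object sequence $obj$ across initial and final states, but since object creation is modelled as activation of pre-existing objects, a single enumeration $obj$ legitimately serves both, exactly as in Theorem~\ref{thm:def_wp}.
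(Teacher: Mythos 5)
Your proposal is correct and follows essentially the same route as the paper: the paper likewise derives expressiveness directly from Theorem~\ref{thm:def_wp} together with the equivalence of weakest-precondition and strongest-postcondition definability cited from \cite{Olderog83}, with no further argument. Your supplementary direct construction of the strongest postcondition via $comp_s$ and the freezing machinery is a harmless (and sound, modulo existentially binding $obj$ and the final-state code $\varsigma$) elaboration that the paper does not spell out.
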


\subsection{Completeness for Statements}\label{sec:comp_stmt}

As usual \cite{Cook78,Gorelick75}, the core of our completeness proof consists of an induction over the
structure of a statement $s$.
Since several of our rules deviate from theirs, we need to exchange these
cases in argument. We will concentrate on the most interesting cases.

\smallskip

\noindent\textbf{Induction Basis:}
\begin{itemize}
\ifx \version\extended
 \item $s \equiv \Vnull$: Assume $\models \{p\} \PL{null} \{q\}$. Then, by the operational semantics,
 $p \rightarrow q[\result := \Vnull]$ must also be true. It is hence derivable in $\PSAL$ and the desired
 result follows from the CONST axiom followed by applying the rule of consequence (CONS).
 \textit{Typesafety:} The CONST axiom always derives typesafety. Should $\mtagtypesafe$ not be required, it can be omitted
 in the rule of consequence. The same holds for $\mtagfailsafe$ and $\mtagterminates$.
\fi
 \item $s \equiv \PL{u}$: Assume $\models \{p\} \PL{u} \{q\}$. Then, by the operational semantics,
 $p \rightarrow q[\result := \PL{u}]$ must also be true. It is hence derivable in $\PSAL$ and the desired
 result follows from the VAR axiom followed by applying the rule of consequence (CONS).
 \textit{Typesafety:} Assume $\models \{p\} \PL{u} \{\mtagtypesafe \wedge q \}$. Then $\{p\} \PL{u} \{q\}$
 and $\{p\} \PL{u} \{\mtagtypesafe\}$ must also be true. The former can thus be derived using above argumentation
 and the latter implies $p \rightarrow \PL{u} \not= \uninitialized$, which is hence derivable in $\PSAL$ and
 the axiom VAR-TAG followed by an applying the rule of consequence (CONS) derives $\{p\} \PL{u} \{\mtagtypesafe\}$.
 Now the rule of conjunction (CONJ) followed by the rule of consequence (CONS) derives the desired result.
 $\mtagfailsafe$ and $\mtagterminates$ can be derived using the axiom VAR-TAG without any preconditions.
 \item $s \equiv \PL{@v}$: Just like the case for $\PL{u}$, applying IVAR instead of VAR and IVAR-TAG instead of VAR-TAG.
\end{itemize}

\noindent\textbf{Induction Hypothesis:} $\models \{p\} s \{q\} \rightarrow \vdash_{\PSHL,\PSAL} \{p\} s \{q\}$ for
all assertions $p,q$ and all statements $s$ of a program $\pi$ containing no recursive method calls.

\smallskip

\noindent\textbf{Induction Step:}
\begin{itemize}
 \item $\PL{u} := e$: Assume $\models \{p\} \PL{u} := e \{\mtagvarX \wedge q\}$. Then according to the operational semantics,
  $\{p\} e \{\mtagvarX \wedge q[\PL{u} := \result]\}$ must also be. By the induction hypothesis, it is hence derivable.
  An application of the rule ASGN derives the desired result.
\ifx \version\extended
 \item $s \equiv s_1;s_2$: Assume $\models \{p\} s_1;s_2 \{\mtagvarX \wedge q\}$. Then by the expressibility of
 the assertion language, there is an intermediate assertion $r$ such that $\{p\} s_1 \{\mtagvarX \wedge r\}$
 and $\{r\} s_2 \{\mtagvarX \wedge q\}$ are
 also true. Hence by the induction hypothesis both are derivable and an application of the rule SEQ derives the
 desired result.
\fi
 \item $s \equiv \PL{\textbf{if} }e\PL{ \textbf{then} }s_1\PL{ \textbf{else} }s_2\PL{ \textbf{end}}$:  Assume $\{p\} \PL{\textbf{if} }e\PL{ \textbf{then} }s_1\PL{ \textbf{else} }s_2\PL{ \textbf{end}} \{\mtagvarX \wedge q\}$ is true.
 Then, by the expressiveness of the assertion language and the operational semantics, there is an intermediate
 assertion $r$ such that
 $\{p\} e \{\mtagvarX \wedge r\}$, $\{r \wedge \mapbool(\result, true) \} s_1 \{ \mtagvarX \wedge q \}$ and 
 $\{r \wedge \mapbool(\result, false) \} s_2 \{ \mtagvarX \wedge q \}$ are also true and hence derivable
 by the induction hypothesis. Now an application of the rule COND derives the desired result.
 \textit{Failsafety:} Since above argumentation already ensured that $e$, $s_1$ and $s_2$ are all failsafe, the
 only additional requirement is $\{p\} e \{\result \not= \Vnull\}$. However, since the case $\result = \Vnull$
 leads to failure in the operational semantics, this must hold for any execution of $s$ in order to be failsafe and
 hence must be derivable by the induction hypothesis.
 
 \textit{Typesafety:} The same argumentation as for failsafety applies here, only the additional requirement is
 $\{p\} e \{\result \not= \Vnull \rightarrow \mapbool(\result)\}$. Note that the case of $\result = \Vnull$ can be
 deliberately allowed, since it leads to a failure in the operational semantics and thus does not affect typesafety.
 \item $s \equiv \PL{\textbf{while} } e \PL{ \textbf{do} } s_1 \PL{ \textbf{done}}$: Assume $\{p\} \PL{\textbf{while} } e \PL{ \textbf{do} } s_1 \PL{ \textbf{done}} \{\mtagvarX \wedge q\}$ is true.
 Then, by the standard argument for while loops due to Cook \cite{Cook78} (and explained particularly well by Apt \cite{Apt81}),
 the expressiveness of the assertion language and the operational semantics, there are two assertions $i$ and $i'$ such that
 $p \rightarrow i$, $\{i\} e \{\mtagvarX \wedge i'\}$, $\{i' \wedge \mapbool(\result, true)\} s_1 \{\mtagvarX \wedge i\}$ and
 $i'[b/\result] \wedge \mapbool(b, false) \wedge \result = \Vnull \rightarrow q$ are true and hence
 derivable by the induction hypothesis and the completeness of $\PSAL$. While $i$ is the loop invariant of $s$, $i'$ is an intermediate state neccessary because in \langDyn, $e$ could
 have side-effects. Now, an application of the LOOP rule followed by the rule of consequence derives the
 desired result.
 \ifx \version\extended
 
 \textit{Termination:}
 Assuming $\{p\} \PL{\textbf{while} } e \PL{ \textbf{do} } s_1 \PL{ \textbf{done}} \{\mtagterminates \wedge q\}$, then there is a
 $\mu$-recursive function $v(\varsigma)$ simulating the execution of $s$ using $comp_s$ and determining
 the least number of iterations it takes to reach a state $\tau$ from the current state such that $e$
 evaluates to false in $\tau$. Note that by Theorem~\ref{thm:mu_rec_in_ass} $v(\varsigma)$ can expressed
 in \langAL(). Also, by our assumption that the loop $s$ terminates, the function $v(\varsigma)$
 is well-defined on all states in $p'$ and thus $r(z) \equiv all(obj) \wedge code(\overline{x},obj,\varsigma) \wedge z = v(\varsigma)$
 is a canonical loop variant satisfying
 $\forall \sigma \bullet \sigma \models p' \rightarrow \exists z':\typeALnum \bullet r(z')$.
 Since $v(\varsigma)$ determines the number of iterations until reaching a target state, executing $s;e$
 clearly decreases it and thus
 $\models \{p' \wedge \mapbool(\result, true) \wedge r(z)\} s;e \{p' \wedge \forall z':\typeALnum \bullet r(z') \rightarrow z' < z\}$
 holds. By the induction hypothesis, it is thus derivable. An application of the LOOP rule derives the
 desired result.
 \else
  \textit{Termination:} see \cite{extended_version}
 \fi
 
 \textit{Failsafety \& Typesafety:} the exact same argument as for conditionals applies here as well.
 \todo{constructor call (maybe add below (after method call))}
\end{itemize}

\subsection{Completeness for Recursive Methods}

The methodology for proving a Hoare logic complete for recursive procedures by using most general correctness
formulas is due to Gorelick \cite{Gorelick75}. It was extended to
OO-programs by De Boer and Pierik \cite{DeBoerPierik2003}.

A curious implication of dynamic dispatch under dynamic typing is that the lack of type information
prohibits pinpointing the exact target of a method call. For instance, the weakest precondition of the call
$\PL{x.size}()$ with respect to the postcondition $\mapnum(\result, 5)$ must include all possibilities
like the case of the variable $x$ referring to a string of length $5$ as well as $x$ referring to a list
of size $5$.
In general, the weakest precondition of a method call $l.\PL{m}(v_1,...,v_n)$ is the disjunction of all
weakest preconditions derivable as described in the proof of Theorem~\ref{thm:MGF} from the most
general correctness formulas of all methods $\PL{C.m}$ of arity $n$ of all classes $C \in \Class$, each
conjoined with the corresponding type assumption $\llbracket l \rrbracket \in \{C\}$.
Note that this methodology introduces an implicit closed world assumption as it fails
when using a method with a different set of classes. However, we regard this
problem as one of modularity rather than completeness and thus out of scope.

%
As our tagged Hoare logic incorporates different notions of correctness, we generalize Gorelick's idea
to a set of most general correctness formulas.
The most general correctness formulas for a statement $s$ are

\noindent $MGF(s) = \{ \{WP(s,init)\} s \{init\} \} \cup \{ \{WP_{\mtagtagsmall}(s,true)\} s \{\mtagtag\} \mid \mtagtag \in \mtags \}$

\info{with $init \equiv all(obj) \wedge code(\overline{x},obj,\varsigma)$}. The reason for this
is obvious: From $MGF(s)$, we can deduce $\{WP_{\mtagvarXsmall}(s, q)\} s \{\mtagvarX \wedge q\}$
with $\mtagvarX \subseteq \tags$ using the conjunction rule. The converse is not in all cases possible.

The results from Section~\ref{sec:comp_stmt} imply that above set 
can be derived for any \langDyn{}
statement $s$ given that they are true. 
Should, e.g.,  $s$ raise a type error on all inputs then
$WP_{\mtagtypesafesmall}(s,true) \equiv false$ and $\{false\} s \{\mtagtypesafe\}$ is derivable.

\begin{theorem}[MGFs]\label{thm:MGF}
  $\models \{p\} s \{\mtagvarX \wedge q\}  \rightarrow  MGF(s) \vdash_{\PSHL,\PSAL} \{p\} s \{\mtagvarX \wedge q\}$
\end{theorem}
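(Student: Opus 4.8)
The plan is to peel the tags off the functional postcondition and reduce everything to weakest preconditions, so that the structural rules CONS, CONJ, INV, SUBST and $\exists$-INT, together with the freezing machinery, can assemble the derivation from the members of $MGF(s)$. From $\models \{p\} s \{\mtagvarX \wedge q\}$ one immediately reads off $\models \{p\} s \{q\}$ and $\models \{p\} s \{\mtagtag\}$ for each $\mtagtag \in \mtagvarX$. By Theorem~\ref{thm:def_wp} the predicates $WP(s,q)$ and $WP_{\mtagtagsmall}(s,true)$ are expressible, and validity turns into the $\PSAL$-derivable implications $p \rightarrow WP(s,q)$ and $p \rightarrow WP_{\mtagtagsmall}(s,true)$. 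Hence, by CONS, it suffices to derive $\{WP(s,q)\} s \{q\}$ and each $\{WP_{\mtagtagsmall}(s,true)\} s \{\mtagtag\}$ from $MGF(s)$, conjoin them with CONJ, and finish with a single CONS whose strengthened precondition is $p$. The tag triples are literally the non-functional elements of $MGF(s)$, so they are available as assumptions with no further work.

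The substance is the functional triple $\{WP(s,q)\} s \{q\}$, which I would adapt from the single general formula $\{WP(s,init)\} s \{init\}$, $init \equiv all(obj) \wedge code(\overline{x},obj,\varsigma)$, by freezing the initial state. Introduce a fresh logical variable $\varsigma_0$ and the frozen predicate $I \equiv WP(s,q)\Theta^{\overline{x}}_{obj}(\varsigma_0)$; by construction of $\Theta$ it mentions only logical variables, so it is state-independent and, by the Invariance theorem (Theorem~\ref{thm:freezing}), preserved by $s$. After a preparatory SUBST keeping $\varsigma,\varsigma_0,obj$ fresh with respect to $p,q,s$, the INV rule conjoins $I$ to the general formula, yielding
\[ \{ I \wedge WP(s,init) \} \; s \; \{ I \wedge init \}, \qquad I \equiv WP(s,q)\Theta^{\overline{x}}_{obj}(\varsigma_0). \]
Unfolding $I$ via Theorem~\ref{thm:def_wp} gives $I \equiv \forall \varsigma'' \bullet comp_s(\varsigma_0,\varsigma'') \rightarrow q\Theta^{\overline{x}}_{obj}(\varsigma'')$, and the frozen image of the general formula's own precondition, $WP(s,init)\Theta^{\overline{x}}_{obj}(\varsigma_0)$, amounts to $comp_s(\varsigma_0,\varsigma)$.

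It then remains to close the triple with CONS. For the exit implication I would use $init$ to obtain that the final state is encoded by $\varsigma$, the reachability fact $comp_s(\varsigma_0,\varsigma)$ (supplied by freezing the general precondition onto $\varsigma_0$, as just noted, via Lemma~\ref{lem:comp_s}), and then instantiate $\varsigma'' := \varsigma$ in $I$ to get $q\Theta^{\overline{x}}_{obj}(\varsigma)$, which by the Freezing lemma (Lemma~\ref{lem:theta}) is precisely $q$ in the final state. For the entry implication I would use Left-Totality (Lemma~\ref{lem:sim_lt}) to pick, in any state satisfying $WP(s,q)$, an encoding $\varsigma_0$ of that state, let Freezing turn $WP(s,q)$ into $I$, and discharge $\varsigma_0$ with $\exists$-INT. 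Combining the functional triple with the tag triples by CONJ and adjusting the precondition with CONS using $p \rightarrow WP(s,q) \wedge \bigwedge_{\mtagtag \in \mtagvarX} WP_{\mtagtagsmall}(s,true)$ then yields $\{p\} s \{\mtagvarX \wedge q\}$.

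The hard part will be the coordination of the two frozen states in the consequence step. The target final encoding $\varsigma$ occurs free in the general formula, and one must guarantee that the link $comp_s(\varsigma_0,\varsigma)$ tying it to the frozen initial encoding $\varsigma_0$ is genuinely available at the exit — it is, because that link is the frozen image of the general precondition $WP(s,init)$ — while simultaneously keeping every introduced logical variable fresh so that INV, SUBST and CONJ apply without capture. A further delicate point is the treatment of initial states from which $s$ diverges: for the partial-correctness part the triple holds vacuously there, and one must check that the chosen freezing route (or a preliminary case split with the DISJ rule on whether a final state is reached) still derives it rather than demanding an unreachable encoding $\varsigma$.
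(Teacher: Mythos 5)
Your overall skeleton coincides with the paper's: splitting $\models \{p\} s \{\mtagvarX \wedge q\}$ into a functional triple and one triple per tag, getting each tag triple from the corresponding element of $MGF(s)$ via a $\PSAL$-derivable implication $p \rightarrow WP_{\mtagtagsmall}(s,true)$ and CONS, and reassembling with CONJ — that is exactly parts 2) and 3) of the paper's proof. The gap is in your functional part, and it is fatal as stated: the exit implication of your closing CONS step is not valid. Your derived triple $\{I \wedge I' \wedge WP(s,init)\}\; s\; \{I \wedge I' \wedge init\}$, with $I \equiv WP(s,q)\Theta^{\overline{x}}_{obj}(\varsigma_0)$ and $I' \equiv WP(s,init)\Theta^{\overline{x}}_{obj}(\varsigma_0)$, carries only \emph{state-independent} facts about $\varsigma_0$ through $s$; nothing in it records that $\varsigma_0$ encodes the run's \emph{actual} initial state. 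The implication $I \wedge I' \wedge init \rightarrow q$ must be valid under all valuations of the free logical variables, and it fails under a valuation where the $\varsigma_0$-state diverges under $s$ while $\varsigma$ encodes a final state violating $q$: then no $\varsigma''$ encoding a state satisfies $comp_s(\varsigma_0,\varsigma'')$, so $I$ and $I'$ hold vacuously and impose no constraint tying $\varsigma$ to $q$ — this also refutes your claim that $WP(s,init)\Theta^{\overline{x}}_{obj}(\varsigma_0)$ ``amounts to'' $comp_s(\varsigma_0,\varsigma)$; by Lemma~\ref{lem:comp_s} that reading needs the $\varsigma_0$-state to terminate — yet an initial state all of whose runs end in the $\varsigma$-state satisfies your entire precondition. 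The repair is not cosmetic: you cannot add $all(obj_0) \wedge code(\overline{x},obj_0,\varsigma_0)$ to the precondition (it is not preserved by $s$, and INV cannot transport it to the postcondition), and you cannot apply $\exists$-INT to discharge $\varsigma_0$ before weakening the postcondition, since $\varsigma_0$ occurs free there — but the weakening is precisely the invalid step. The ``coordination of the two frozen states'' you flagged as the hard part is in fact unachievable with INV, SUBST and $\exists$-INT from the $WP$-style MGF.

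The paper's proof avoids a second frozen state altogether, and this is the key idea you are missing: it freezes the \emph{postcondition} onto the MGF's own final-state variable $\varsigma$. Theorem~\ref{thm:freezing} directly yields $\vdash \{q\Theta^{\overline{x}}_{obj}\}\, s\, \{q\Theta^{\overline{x}}_{obj}\}$; CONJ with $\{WP(s,init)\}\, s\, \{init\}$ gives $\{q\Theta^{\overline{x}}_{obj} \wedge WP(s,init)\}\, s\, \{q\Theta^{\overline{x}}_{obj} \wedge init\}$; the exit implication $q\Theta^{\overline{x}}_{obj} \wedge init \rightarrow q$ is immediate from Lemma~\ref{lem:theta}, and the entry implication $p \rightarrow q\Theta^{\overline{x}}_{obj} \wedge WP(s,init)$ is argued semantically and discharged by completeness of $\PSAL$. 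Here the MGF's postcondition $init$ itself ties the frozen variable to the run's actual final state — exactly the link your $comp_s$-bridge between $\varsigma_0$ and $\varsigma$ was supposed to provide and cannot; no $comp_s$, INV or left-totality appears in the derivation at all. Your worry about divergent initial states is legitimate but secondary: it concerns the entry implication (e.g.\ for unsatisfiable $q$), affects the paper's route as well, and is handled by standard $\exists$-INT/DISJ footwork, whereas the unsound exit implication is specific to your route and sinks it.
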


\begin{proof}
 Assume $\models \{p\} s \{\mtagvarX \wedge q\}$. Then $\{p\} s \{q\}$ and $\{p\} s \{\mtagtag\}$ for all
 $\mtagtag \in \mtagvarX$ are also all true.\newline
 \noindent\textbf{1)} $\vdash \{p\} s \{q\}$:
 For technical convenience only we assume that $p$ and $q$ do not contain free occurrences of the logical variables used to
 freeze states. If they do, these need to be renamed using the substitution rule.
 By Theorem~\ref{thm:freezing} we have $\{q\Theta^{\overline{x}}_{obj}\} s \{q\Theta^{\overline{x}}_{obj}\}$. An
 application of the conjunction rule yields

 \vspace{0.05cm}
 \hfill $ \{q\Theta^{\overline{x}}_{obj} \wedge WP(s,init)\} s \{q\Theta^{\overline{x}}_{obj} \wedge init\} $ \hfill
 \vspace{0.05cm}

 Next, we have to prove $p \rightarrow q\Theta^{\overline{x}}_{obj} \wedge WP(s,init)$.
 Assume $\sigma \models p$. Then by $\models \{p\} s \{q\}$, for all $\sigma' \in \semantics\llbracket s \rrbracket(\sigma)$,
 we have $\sigma' \models q$. By Lemma~\ref{lem:theta}, we have $\sigma' \models q\Theta^{\overline{x}}_{obj} \wedge init$.
 Now, by Theorem~\ref{thm:freezing}, we have $\vdash \{q\Theta^{\overline{x}}_{obj}\} s \{q\Theta^{\overline{x}}_{obj}\}$,
 and by soundness of our proof system $\models \{q\Theta^{\overline{x}}_{obj}\} s \{q\Theta^{\overline{x}}_{obj}\}$.
 Hence, $\sigma \models q\Theta^{\overline{x}}_{obj}$ and by the definition of $WP$, $\sigma \models WP(s, init)$.
 Therefore, $p \rightarrow q\Theta^{\overline{x}}_{obj} \wedge WP(s, init)$ holds and since $q\Theta^{\overline{x}}_{obj} \wedge init \rightarrow q$
 follows directly from Lemma~\ref{lem:theta}, an application of the rule of consequence derives $\{p\} s \{q\}$.
 \newline\noindent\textbf{2)} $\vdash \{p\} s \{\mtagtag\}$:
 if true, then $p \rightarrow WP_{\mtagtagsmall}(s,true)$ must also be and is hence derivable by the
 completeness of $\PSAL$. Since $\{WP_{\mtagtagsmall}(s,true)\} s \{\mtagtag\} \in MGF(s)$, an application of
 the consequence rule derives the desired result.
 \newline\noindent\textbf{3)} $\vdash \{p\} s \{\mtagvarX \wedge q\}$: One application of the conjunction rule per tag
 in $\mtagvarX$ completes the proof.\qed
\end{proof}

Finally, since our recursion rule is identical to the one devised by Gorelick \cite{Gorelick75} for this purpose,
we are now able to apply the same inductive argument used by Gorelick for proving our Hoare logic complete for recursive
methods.

\begin{lemma}\label{lem:comp_rec_meth}
Let $M_i \equiv l_i.m_i(\Vector{v_i})$ denote the $i$th (possibly recursive) method call
occurring in a closed program $\pi$
and let $A = \bigcup^{n}_{i = 1} MGF(M_i)$ be the set of most general correctness formulas about these method calls
then for all statements $s$ of $\pi$ and all assertions $p$ and $q$: 
$ \models \{p\} s \{q\} \rightarrow A \vdash_{\PSHL,\PSAL} \{p\} s \{q\}$
\end{lemma}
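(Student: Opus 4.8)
The plan is to mirror the structural induction of Section~\ref{sec:comp_stmt}, but now over statements $s$ of a program $\pi$ that may contain recursive method calls, carrying the fixed assumption set $A$ through every step. Accordingly I take as induction hypothesis $\models \{p\} s' \{q\} \rightarrow A \vdash_{\PSHL,\PSAL} \{p\} s' \{q\}$ for every statement $s'$ syntactically smaller than $s$ and all $p,q$, and I treat each method call as a \emph{leaf} of the syntax tree — its body is never unfolded, so the induction stays well founded. For every compound statement the derivation is exactly the one already given in Section~\ref{sec:comp_stmt}: using expressiveness I obtain the valid intermediate assertions, the induction hypothesis supplies $A$-derivations of the resulting sub-triples about the strictly smaller constituents, and the matching rule (SEQ, COND, LOOP, ASGN, or CNST/NEW) assembles them. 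None of these arguments refers to the absence of recursion, so they carry over unchanged once the method-call leaf is handled.

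The decisive case is therefore a method call $s \equiv e_0.m(e_1,\ldots,e_n)$. I would first apply the METH rule, whose premises split into the argument evaluations $\{p_i\} e_i \{\mtagvarX \wedge p_{i+1}[v_i := \result]\}$ on the strictly smaller expressions $e_i$ — derivable from $A$ by the induction hypothesis — and the residual simple call $\{p_{n+1}\} v_0.m(v_1,\ldots,v_n) \{\mtagvarX \wedge q\}$ on the fresh variables $v_0,\ldots,v_n$. Up to a renaming of these fresh variables (discharged by SUBST) this simple call is one of the listed calls $M_i$, so that $MGF(M_i) \subseteq A$. I then invoke Theorem~\ref{thm:MGF} on the statement $M_i$: since its proof uses only the Invariance Theorem~\ref{thm:freezing} together with the conjunction and consequence rules, it remains valid even though $M_i$ may trigger recursion, and it yields $MGF(M_i) \vdash_{\PSHL,\PSAL} \{p_{n+1}\} M_i \{\mtagvarX \wedge q\}$ for every valid triple about $M_i$. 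Monotonicity of derivability in the assumption set turns this into an $A$-derivation, completing the case. Note that the REC rule is deliberately \emph{not} used here; it is reserved for discharging $A$ in the subsequent completeness theorem along the lines of Gorelick \cite{Gorelick75}, which is precisely why $A$ — and in particular its termination formulas — may be assumed for free in this lemma.

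I expect the method-call case to be the main obstacle, for two reasons. First, under dynamic dispatch the runtime class of the receiver $l_i$ is not statically fixed, so the weakest precondition packaged inside $MGF(M_i)$ must be the full disjunction, over all classes $C \in \Class$ carrying a method $m$ of the correct arity, of the per-class weakest preconditions each guarded by the type assumption $\llbracket l_i \rrbracket \in \{C\}$; arguing that Theorem~\ref{thm:MGF} recovers exactly this disjunction is where the closed-world assumption on $\Class$ is essential. Second, the tag component must be kept honest: $A$ supplies a plain formula $\{WP(M_i,init)\} M_i \{init\}$ together with one formula $\{WP_{\mtagtagsmall}(M_i,true)\} M_i \{\mtagtag\}$ per $\mtagtag \in \mtags$, and these have to be recombined by the conjunction rule into the tagged postcondition $\mtagvarX \wedge q$ exactly as in the proof of Theorem~\ref{thm:MGF}. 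Once the leaf case is secured, the remaining assembly of the induction is routine bookkeeping.
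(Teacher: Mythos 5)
Your proposal is correct and takes essentially the same route as the paper: a structural induction over $s$ in which all compound cases are inherited from Section~\ref{sec:comp_stmt}, and each method call $M_i$ is treated as a leaf discharged by $MGF(M_i) \subseteq A$ together with Theorem~\ref{thm:MGF}, whose applicability to recursive calls the paper likewise justifies by appeal to Gorelick, with the REC rule deferred to Theorem~\ref{thm:comp_rec_meth}. Your extra step of reducing a general call $e_0.m(e_1,\ldots,e_n)$ via METH (plus SUBST for the fresh variables) to one of the listed simple calls $l_i.m_i(\Vector{v_i})$ only makes explicit bookkeeping that the paper leaves implicit in its one-line treatment of the method-call case.
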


\begin{proof}
 By induction over the structure of $s$. Most cases are as in the proof for the non-recursive case.
 Most interesting is the new case for method calls:
  $s \equiv l_i.m_i(\Vector{v_i})$: Assuming $\models \{p\} s \{q\}$ and $s$
  is the $i$th method call $M_i$ in our program, then $MGF(s) \subseteq A$ and hence
  $A \vdash \{p\} s \{q\}$ by Theorem~\ref{thm:MGF}. As Gorelick \cite{Gorelick75} pointed out,
  this also holds for recursive method calls.
\end{proof}
\todo{with Tags}

\begin{theorem}[Completeness for Recursive Methods]\label{thm:comp_rec_meth}\newline
 \vspace{0.1cm}
 \hfill $\models \{p\} s \{\mtagvarX \wedge q\} \hspace{1cm} \rightarrow \hspace{1cm} \vdash_{\PSHL,\PSAL} \{p\} s \{\mtagvarX \wedge q\}$ \hfill
 \vspace{0.1cm}

 \noindent for any statement $s$ of a closed program $\pi$ containing possibly recursive method calls and all
 assertions $p$ and $q$.
\end{theorem}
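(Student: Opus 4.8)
The plan is to lift the conditional derivation of Lemma~\ref{lem:comp_rec_meth} to an unconditional one by discharging the (finite, since $\pi$ is closed) assumption set $A = \bigcup_{i=1}^{n} MGF(M_i)$ with a single application of the REC rule, whose conclusion is exactly the target triple $\{p\} s \{\mtagvarX \wedge q\}$. First I would note that Lemma~\ref{lem:comp_rec_meth}, though stated for partial correctness, extends verbatim to tagged triples: the induction on the structure of $s$ is unchanged, and in the method-call case one appeals to Theorem~\ref{thm:MGF} (using $MGF(M_i) \subseteq A$) to obtain $A \vdash \{p'\} M_i \{\mtagvarX' \wedge q'\}$ for the full tagged specification rather than only its partial-correctness projection. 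Applied to $s$ itself, this discharges the first premise $A \vdash \{p\} s \{\mtagvarX \wedge q\}$ of REC, so no separate conjunction step is needed at top level: the tag bookkeeping is inherited from Theorem~\ref{thm:MGF}.

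For the second premise I would run the same (tagged) argument on each method body. By the operational semantics a call $l_i.m_i(\Vector{v_i})$ executes exactly as the block $\PL{\textbf{begin local} \self}, \Vector{u_i} := l_i, \Vector{v_i}; s_i \text{ \textbf{end}}$, so the body triple is semantically true whenever the corresponding MGF call triple is, and the inductive argument yields $A' \vdash \{p_i \wedge r_i(z)\} \PL{\textbf{begin local} \self}, \Vector{u_i} := l_i, \Vector{v_i}; s_i \text{ \textbf{end}} \{\mtagvarX_i \wedge q_i\}$. Crucially, inside a body every recursive call occurs at strictly smaller recursion depth, which is precisely what licenses using the strengthened set $A'$ in place of $A$; $A'$ differs from $A$ only by the guarded conjunct $\mtagterminates \rightarrow \forall z':\typeALnum \bullet r_i(z') \rightarrow z' < z$, which is vacuous for every non-termination tag and records a genuine decrease for $\mtagterminates$. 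The third premise, the receiver conditions $p_i \rightarrow (\mtagfailsafe \rightarrow l_i \not= \Vnull \wedge \mtagtypesafe \rightarrow l_i \not= \Vnull \rightarrow l_i.@\mathbf{c} = \oC_{C_i})$, is read directly off the failure and type-error rules for method calls and is hence derivable in $\PSAL$. With the three premises in hand, REC discharges $A$ and delivers $\vdash_{\PSHL,\PSAL} \{p\} s \{\mtagvarX \wedge q\}$.

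The hard part will be the termination tag, since $A \ne A'$ only bites there. Here I would reuse the device from the while-loop termination case: assuming $\models \{p\} s \{\mtagterminates \wedge q\}$ forces the (mutual) recursion among the $m_i$ to terminate, so for each $i$ there is a $\mu$-recursive function $v_i(\varsigma)$ that, simulating the computation via $comp$ on the frozen encoding $\varsigma$, returns the remaining call depth from a state satisfying $p_i$; by Theorem~\ref{thm:mu_rec_in_ass} it is expressible in \langAL{}, and I would set $r_i(z) \equiv all(obj) \wedge code(\overline{x}, obj, \varsigma) \wedge z = v_i(\varsigma)$. Conditional totality $\forall \sigma \bullet \sigma \models p_i \rightarrow \exists z':\typeALnum \bullet r_i(z')$ then holds by Lemma~\ref{lem:sim_lt}, and since each nested (and possibly mutually recursive) call strictly lowers the depth, running a body strictly decreases $z$, matching the decrement baked into $A'$. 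The delicate points are defining a single well-founded measure that decreases uniformly across all the mutually recursive bodies and checking that its frozen-state encoding fits the exact shape of $r_i(z)$ demanded by REC; once this is arranged, the predicate-valued variant discharges the decrease premise just as in the LOOP case, completing the induction. \qed
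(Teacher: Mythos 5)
Your proposal matches the paper's own proof in all essentials: you discharge the assumption set $A$ of most general correctness formulas via Lemma~\ref{lem:comp_rec_meth} to obtain the premises of REC (with the receiver conditions supplied by $\PSAL$ and $A$ recovered from $A'$ by consequence when $\mtagterminates$ is absent, exactly as the paper does), and you handle the termination tag with the same device as the paper's LOOP case, taking $r_i(z) \equiv all(obj) \wedge code(\overline{x},obj,\varsigma) \wedge z = v_i(\varsigma)$ for a $\mu$-recursive depth function expressible by Theorem~\ref{thm:mu_rec_in_ass}. Your sketch is, if anything, slightly more explicit than the paper on the mutual-recursion measure, so no gaps to report.
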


\begin{proof}
 Expressiveness of \langAL{} guarantees the expressibility of $WP_{\mtagvarXsmall}(s,q)$ for any statement
 $s$ and postcondition $q$. Hence by setting $q \equiv init$ and $s \equiv M_i$ for any
 $i \in \mathbb{N}^1_n$ we can see that the set $A$ of most general
 correctness formulas of all method calls is expressible in our logic. Now, since by definition of $WP_{\mtagvarXsmall}$,
 these formulas are true, we have by Lemma~\ref{lem:comp_rec_meth}

 $A \vdash_{\PSHL,\PSAL} \{p_i\} \PL{\textbf{begin local} } \PL{self},\Vector{u_i} := l_i, \Vector{v_i}; s_i \PL{ \textbf{end}} \{q_i\}$ as well as

 $A \vdash_{\PSHL,\PSAL} \{p_{\mtagtagsmall,i}\} \PL{\textbf{begin local} } \PL{self},\Vector{u_i} := l_i, \Vector{v_i}; s_i \PL{ \textbf{end}} \{\mtagtag\} \text{ for all $\mtagtag \in \mathcal{T}ags$} $

 with $p_i \equiv WP(M_i,init)$, $q_i \equiv init$, $p_{\mtagtagsmall,i} \equiv WP_{\mtagtagsmall}(M_i,true)$ and $s_i$ denoting the method body of the method called in $M_i$
 for all $i \in \mathbb{N}^1_n$. Note that above statements establish the assumptions in the set $A$ and
 together allow deriving the assumptions for the REC rule of the form

$ A \vdash_{\PSHL,\PSAL} \{p_i \wedge r_i(z)\} \PL{\textbf{begin local} } \PL{self},\Vector{u_i} := l_i, \Vector{v_i}; s_i \PL{ \textbf{end}} \{\mtagvarX_i \wedge q_i\} $

 for all $i \in \mathbb{N}^1_n$. As for the case not concerned with termination, we can simply set $r_i(z) \equiv z=z$.
 Furthermore, assuming $\models \{p\} s \{q\}$, by Lemma~\ref{lem:comp_rec_meth} we have

$ A \vdash \{p\} s \{\mtagvarX \wedge q\} $

 Now these are just the premises of the REC rule. Note that in the case not concerned with termination,
 the set of assumptions $A$ is derivable from $A'$ by applying the consequence rule to each element.
 Hence, an application of the REC rule derives the desired result and completes the proof.
\todo{dynamic dispatch?} 
 \newline\noindent\textbf{Termination:} for proving termination of \langDyn{} programs, the rules LOOP and REC
 must be altered to support so-called loop-variants or recursion bounds. Usually, these take the form
 of an integer expression $t$ whose value a) must be $> 0$ whenever the loop / recursive method is entered
 (thus forcing termination when reaching zero) and b) must decrease on every iteration / recursive call.
 Note that this methodology syntactically restricts the loop variant / recursion bound to be an integer
 expression of the assertion language. Now, as observed by Apt, De Boer and Olderog in
 \cite{AptDeBoerOlderogBook2009}\todo{more accurate}, this method introduces incompleteness in the case of
 total correctness, since it assumes the integer expressions of the assertion language to be able to express
 any necessary loop-variant / recursion bound.
 However, while-loops and recursive methods allow \langDyn{}-programs to calculate any
 $\mu$-recursive function and hence obviously also to bound the number of loop iterations by any
 $\mu$-recursive function, while the set of integer operations available in the assertion language might
 be quite limited (e.g. in our case lacking exponentiation).
 We circumvent this problem by introducing a new form of loop-variants and recursion bounds, which allow the
 use of quantifiers. The old form used a logical variable $z$ of type $\typeALnum$ to store the value of $t$
 before a loop iteration ($t = z$ in the precondition) and compare it to the new value in the postcondition
 ($t < z$).
 Our new form uses a predicate $r(z)$ with $z$ among its free variables instead of $t = z$ and the
 logical expression $\forall z':\typeALnum \bullet r(z') \rightarrow z' < z$ where $r(z')$ denotes the result
 of substituting $z'$ for $z$ in $r$ instead of $t < z$.
 Firstly, observe that this is a conservative extension as one may set $r \equiv t = z$ for some integer
 expression $t$.
 Secondly, note that by
 \ifx \version\extended
 Lemma~\ref{lem:trans_mu_rec_to_ass}
 \else
 \cite[Lemma 5]{extended_version}
 \fi, $r$ may compute any $\mu$-recursive function
 and is thus contrary to integer expressions able to express any function computable by \langDyn{}-programs
 including exponentiation.\qed
\end{proof}\todo{is this sufficient to proof completeness for total correctness?}

\ifx \version\extended
\section{The Translational Approach}
\label{sec:translational_approach}
The translational approach was introduced by Apt, De Boer and Olderog in \cite{AptDeBoerOlderog2012}
to facilitate the availability of sound and complete axiomatic proof systems for different programming languages.
The basic idea is to transfer soundness and completeness results for their proof systems from language to language
by means of a (more intuitive and usually much simpler) semantics-preserving translation between the programming
languages. The program logic presented in this work handles the fundamental issues of dynamically typed languages
and hence opens the gate for using the translational approach to prove program logics for
such programming languages sound and complete in future.
In the following, we will list some ideas on how more advanced dynamic features found in real-world
dynamically-typed languages like JavaScript, Ruby and Python might be translated to \langDyn{}.

\subsection{Method Update}

Languages combining class-based object orientation with dynamic typing (like Ruby and Python) often support
a feature we call ``method update'' allowing programs to override methods at runtime - most often it is not
even required for the arity of the new method to match the old version.

\noindent\textbf{Translation:} First, for each method $C.m$ in the original program having multiple versions,
the corresponding \langDyn{} program must have a global state $g_{C.m}$ for storing the information which
version is the current one. Since \langDyn{} program usually do not have any global state, we accomplish this
by introducing a class $Global$ encapsulating this state information and passing a reference $g$ to its only
instance into each and every method in the program.
Second, let there be versions $C.m_1, ..., C.m_k$ of a method $C.m$
with arities $n_1,...,n_k$ within the original program, then the function $a_{C.m}(n) = \{ C.m_i \mid n_i = n \}$
groups all versions having the same arity. For each arity $n$, such that $|a_{C.m}(n)| > 0$, the corresponding
\langDyn{} program contains a method $C.m_n$ with arity $n$ whose body is structured as follows

\textbf{if}\verb| g.version_C_m() == 1| \textbf{then}

\verb| # body of C.m_1 here|

\verb|else|

\verb|  if g.version_C_m() == 2 then|

\verb|   # body of C.m_2 here|

\verb|  else|

\verb|  ...|

\verb|        else|

\verb|          typeerror|

\verb|        end|

\verb|  end|

\verb|end|

where $a_{C.m}(n) = \{C.m_1,C.m_2,...,C.m_l\}$.

Then, we only have to treat the updating of a method $C.m$ as setting its global state to a new value $v$
by \verb|g.set_version_C_m(v)|.

Furthermore, when translating every method call to a method $C.m$ of arity $n$ in the original program as
a call to $C.m_n$ in the corresponding \langDyn{} program, the result should be behaviourally equivalent.

\subsection{Closures}

Another feature of functional languages that is often found in dynamically typed
languages are closures (E.g. JavaScript functions or ruby blocks). Closures are characterized by two properties:
Firstly, they allow passing around (a reference to) code as data and secondly, they capture the values of all
free variables within their body upon creation.

\noindent\textbf{Translation:} In \langDyn{}, we can emulate both properties by introducing a new class
$C_c$ for each closure $c$. This class defines a method $do()$ of the same arity as the closure and whose
method body is just $c$'s body with all free variables replaced by corresponding instance variables as well
as a constructor $init$ taking all variables as arguments that occur free in $c$'s body and storing their
value in corresponding instance variables. Now, the closure definition $c = \lambda p_1,...,p_n. s$ can be
replaced by
\[c = \mathbf{new}\PL{ }C_c(v_1,...,v_k)\]
where $v_1,...,v_k$ are all variables occuring free in $s$ and each call $c(a_1,...,a_n)$ can be replaced
by the method call
\[c.do(a_1,...,a_n)\]
The resulting program should be behaviourally equivalent. Note that a finite program can only contain a finite
number of closures and this replacement hence will only introduce a finite number of additional classes $C_c$.

\subsection{Multiple Return Values}

In Ruby, methods are allowed to have multiple return values. Under the hood, this is realized by converting
the return values into a list and assigning the list elements to their respective variables.

\noindent\textbf{Translation:}
Since \langDyn{} also supports heterogeneous lists, one can create a similar mechanism by translating
\[\PL{return } e_1,...,e_n \text{ \hspace{2cm} into \hspace{2cm}} \PL{return } [e_1,...,e_n] \]
and
\[v_1,...,v_n = C.m() \text{ \hspace{1cm} into \hspace{1cm} } l = C.m(); v_1 = l[0]; ...; v_n = l[n-1] \]
where $l$ is a fresh local variable.

\fi

\section{Conclusions \& Outlook}
\label{sec:conclusion}

We presented a sound and (relative) complete Hoare logic for \langDyn{}.
Open are the issues of modularity (applicability to open programs) and
allowing tags carrying additional information (to incorporate extensions like De Boer's footprints \cite{DeBoerDeGouw2015Invariance}).


\smallskip

\textbf{Acknowledgements:} We thank Dennis Kregel for noticing that restricting $\result$ causes
incompleteness and him, Nils-Erik Flick and the anonymous referees for many useful comments on prior
versions of this paper.
%

\todo{embed the bibliography}
\bibliographystyle{splncs03}
\bibliography{specific}

\ifx \version\extended

\appendix

\section{Omitted Proofs}
\label{app:omitted_proofs}

Notation: We sometimes use $p\frac{v}{t}$ to denote the result of substituting $t$ for $v$ in $p$.

\vspace{0.3cm}

\noindent Proof of Theorem~\ref{thm:def_wp}:
\begin{proof}
 We have to prove the equality $LHS = RHS$ where
 $LHS \equiv \llbracket p \rrbracket$ and $RHS \equiv \{ \sigma \mid \semantics\llbracket s \rrbracket(\sigma) \subseteq \llbracket q \rrbracket\}$.
 We will first prove the direction $LHS \subseteq RHS$ and then turn to the converse question.\newline
 \textbf{1)} $LHS \subseteq RHS$: Assuming a state $\sigma \in LHS$, then by left-totality of $\sim$, there
 is a $\varsigma'$ such that $\sigma \sim \varsigma'$. Furthermore, from
 $\sigma \models \forall \varsigma \bullet comp_s(\varsigma', \varsigma) \rightarrow q\Theta^{\overline{x}}_{obj}(\varsigma)$ and
 Lemma~\ref{lem:comp_s} we can deduce that every state $\sigma' \in \semantics\llbracket s \rrbracket(\sigma)$
 has a $\varsigma$ satisfying $\sigma' \sim \varsigma$ as well as
 $comp_s(\varsigma', \varsigma)$. Since all premises of the implication on the left-hand side are satisfied,
 $\models q\Theta^{\overline{x}}_{obj}(\varsigma)$ must hold as well. Note that the latter two are properties of
 $\varsigma$ and $\varsigma'$ rather than any particular state. Using Lemma~\ref{lem:theta} we can then deduce
 $\sigma' \models q$ and since our only assumption about $\sigma'$ is that it is a post-state of $\sigma$, it follows
 that $\semantics\llbracket s \rrbracket(\sigma) \subseteq \llbracket q \rrbracket$ and hence that $\sigma \in RHS$.

 \textbf{2)} $RHS \subseteq LHS$: Assume $\sigma \in RHS$. $\sigma$ is hence an initial state and all its post-states $\sigma' \in \semantics\llbracket s \rrbracket(\sigma)$
 satisfy the assertion $q$. Then, by left-totality of $\sim$, there is a frozen state $\varsigma'$ such that
 $\sigma \sim \varsigma'$ and for every post-state $\sigma'$ there is a frozen state $\varsigma$ such that
 $\sigma' \sim \varsigma$.
 Now, by Lemma~\ref{lem:comp_s}, every pair of (frozen) pre- and post-state $(\varsigma', \varsigma) \in comp_s$.
 Also, since the post-states $\sigma'$ satisfy $q$ and $\sigma' \sim \varsigma$, by Lemma~\ref{lem:theta} we know that
 $\models q\Theta^{\overline{x}}_{obj}(\varsigma)$ holds.
 Therefore, the entire assertion $p$ is true in $\sigma$ and hence $\sigma \in LHS$. \qed
\end{proof}

\begin{lemma}\label{lem:trans_mu_rec_to_ass}
 For every $\mu$-recursive $k$-ary function $f$, there exists a formula $p$ in $\langAL{}$ with free
 variables $r, x_1,...,x_k$, such that
 \[ f(a_1,...,a_k) = z \text{ iff } \models p \frac{r, x_1, ..., x_k}{z, a_1, ..., a_k} \]
\end{lemma}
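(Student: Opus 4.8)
The plan is to prove the statement by structural induction on the construction of $f$ as a $\mu$-recursive function, building the representing formula $p$ compositionally from the formulas representing the constituent functions. Throughout I maintain the invariant that the formula representing a $j$-ary function has free variables exactly $r, x_1, \dots, x_j$ (renaming bound variables to avoid capture), where $r$ always holds the result; the representing biconditional of the statement is then verified in each case. The base cases are immediate, since \langAL{} provides the constant $0$, the operation $+$, and equality on $\typeALnum$: the zero function is represented by $r = 0$, the successor by $r = x_1 + 1$, and the $i$-th $k$-ary projection by $r = x_i$.

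For \emph{composition}, where $f(\vec{x}) = h(g_1(\vec{x}), \dots, g_m(\vec{x}))$ with representing formulas $p_h$ (in $r, y_1, \dots, y_m$) and $p_{g_i}$ (in $r, x_1, \dots, x_k$) given by the induction hypothesis, I would existentially quantify over the intermediate results, so that $p_{g_i}[r := y_i]$ asserts $y_i = g_i(\vec{x})$ and $p_h$ asserts $r = h(\vec{y})$:
\[ p_f \equiv \exists y_1, \dots, y_m : \typeALnum \bullet \left( \bigwedge_{i=1}^{m} p_{g_i}[r := y_i] \right) \wedge p_h. \]
For \emph{minimization}, $f(\vec{x}) = \mu n.\,[g(\vec{x}, n) = 0]$, writing $G(a,c)$ for the instance of $p_g$ asserting $g(\vec{x}, a) = c$, I would set
\[ p_f \equiv G(r, 0) \wedge \forall m : \typeALnum \bullet m < r \rightarrow \exists w : \typeALnum \bullet w \not= 0 \wedge G(m, w), \]
expressing that $g(\vec{x}, r) = 0$ while $g(\vec{x}, m)$ is defined and nonzero for every $m < r$.

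The conceptual heart of the argument is the \emph{primitive recursion} case, and here I exploit that \langAL{} has native quantification over finite sequences, which is precisely what lets me avoid the G\"odel $\beta$-function encoding required by the analogous proof in first-order arithmetic. Given $f(\vec{x}, 0) = g(\vec{x})$ and $f(\vec{x}, n+1) = h(\vec{x}, n, f(\vec{x}, n))$, and writing $B(c)$ for $p_g[r := c]$ and $S(a,b,c)$ for the instance of $p_h$ asserting $h(\vec{x}, a, b) = c$, I would existentially quantify a sequence $s : \typeALnum^*$ holding the computation trace $f(\vec{x}, 0), \dots, f(\vec{x}, n)$:
\[ p_f \equiv \exists s : \typeALnum^* \bullet |s| = n + 1 \wedge B(s[0]) \wedge (\forall i : \typeALnum \bullet i < n \rightarrow S(i, s[i], s[i+1])) \wedge r = s[n]. \]

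The routine-but-essential remaining work is verifying the representing biconditional in each case, and the one point demanding genuine care is the correct treatment of \emph{partiality}: when $f(\vec{a})$ is undefined (some $g_i$ diverges in a composition, or no zero of $g$ exists, or $g$ diverges below it in a minimization), the corresponding formula must be unsatisfiable for every value of $r$, which holds because the required existential witnesses (the intermediate values $y_i$, the trace sequence $s$, or the bounded-totality witnesses $w$) simply fail to exist. The main obstacle is the primitive-recursion case, where I must show that a satisfying trace $s$ exists exactly when $f(\vec{x}, n)$ is defined and that $r = s[n]$ then pins down its value; this is where the induction hypotheses on $p_g$ and $p_h$ are combined and where the adequacy of finite-sequence quantification (in place of G\"odel coding) is established once and for all. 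I would conclude by noting that since every $\mu$-recursive function arises from the base functions by finitely many applications of composition, primitive recursion, and minimization, the induction exhausts all cases.
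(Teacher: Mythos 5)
Your proposal is correct and takes essentially the same route as the paper's own proof: a structural induction over the construction of $f$, with base cases $r=n$, $r=x_1+1$, $r=x_i$, existentially quantified intermediate values for composition, a ``zero at $r$, nonzero below $r$'' formula for minimization, and \langAL{}'s native quantification over finite sequences serving as a computation trace for primitive recursion (thereby sidestepping G\"odel's $\beta$-function, just as the paper does). If anything, your rendering is slightly tighter than the paper's displayed formulas: you explicitly tie the result to the trace via $r = s[n]$ in the primitive-recursion case (a binding the paper's formula leaves implicit) and you spell out why partiality is handled correctly, which the paper does not discuss.
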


\begin{proof}
 By induction over the structure of $\mu$-recursive functions.
 \begin{itemize}
  \item If $f$ is a constant function $f(x_1,...,x_k) = n$, then the formula $p \equiv r = n$ satisfies the Lemma.
  \item If $f$ is the successor function $f(x_1) = x_1 + 1$, then the formula $p \equiv r = x_1 + 1$ satisfied the Lemma.
  \item If $f$ is the projection $f(x_1,...,x_n) = x_i$, then the formula $p \equiv r = x_i$ satisfies the Lemma.
  \item If $f$ is a composition of a $k$-ary function $h$ and $k$ $n$-ary functions $g_1,...,g_k$, then by the
  induction hypothesis, there are formulas $p_h, p_{g_1},...,p_{g_k}$ corresponding to the functions $h, g_1,..., g_k$ as
  described in the Lemma. Then, $p \equiv \exists v_1,...,v_k: \typeALnum \bullet p_h \frac{x_1,...,x_k}{v_1,...,v_k} \wedge p_{g_1}\frac{r}{v_1} \wedge ... \wedge p_{g_k}\frac{r}{v_k}$
  satisfies the Lemma.
  \item If $f$ is a primitive recursion $\rho(g,h)$ with a $k$-ary function $g$ and a $k+2$-ary function $h$, then according to
  the induction hypothesis, there are formulas $p_g$ and $p_h$ corresponding to the functions $g$ and $h$ as described in the Lemma.
  Now, $p \equiv \exists s:(\typeALnum^k)^* \bullet |s| = x_1 \wedge p_{g}\frac{r,x_1,...,x_k}{s[0],x_2,...,x_{k+1}}) \wedge$
  $\forall i : \typeALnum \bullet 0 \le i < |s|-1 \rightarrow p_{h}\frac{r,x_1,...,x_{k+2}}{s[i+1],i,s[i],x_1,...,x_k}$ satisfies the Lemma.
  \item If $f$ is a minimization $\mu f$ of a $k$-ary function $f$, then according to the induction hypothesis, there
  is a formular $p_f$ corresponding to $f$ as described in the Lemma. Now,
  $p \equiv \exists v: \typeALnum \bullet p_f\frac{r,x_1,...,x_k}{0,v,x_1,...x_{k-1}} \wedge \forall v':\typeALnum \bullet \exists v_r: \typeALnum \bullet v' < v \rightarrow p_f\frac{r,x_1,...,x_k}{v_r,v',x_1,...,x_{k-1}} \wedge v_r > 0$
  satisfies the Lemma.
 \end{itemize}

\end{proof}

\begin{theorem}\label{thm:mu_rec_in_ass}
 Every $\mu$-recursive function is expressible in \langAL{}.
\end{theorem}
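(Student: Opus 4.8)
The plan is to derive this theorem as an immediate corollary of Lemma~\ref{lem:trans_mu_rec_to_ass}, which has already done all the substantive work. That lemma guarantees that for every $\mu$-recursive $k$-ary function $f$ there is an \langAL{}-formula $p_f$ with free variables $r, x_1,\dots,x_k$ whose models are exactly the tuples $(z,a_1,\dots,a_k)$ with $f(a_1,\dots,a_k)=z$; in other words, $p_f$ captures the graph of $f$. First I would fix what ``expressible'' is to mean here: a function is expressible in \langAL{} if every assertion $a$ mentioning an application $f(\vec{l})$ of $f$ to logical expressions $\vec{l}$ of type $\typeALnum$ is equivalent to an \langAL{}-assertion that does not use $f$ as a primitive operation.

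Given this reading, the argument is short. For an assertion $a$ containing a subterm $f(\vec{l})$, I would introduce a fresh logical variable $r$ of type $\typeALnum$, let $a'$ be the result of replacing that subterm by $r$, and set $\hat{a} \equiv \exists r : \typeALnum \bullet p_f[x_1,\dots,x_k := \vec{l}] \wedge a'$. Because $p_f$ captures a functional relation, the witnessing $r$ is unique whenever $f(\vec{l})$ is defined, so $\hat{a}$ and $a$ agree on all states on which $f(\vec{l})$ has a value; iterating this elimination over every occurrence of every $\mu$-recursive function yields an equivalent \langAL{}-assertion free of such applications. This justifies treating $f(\vec{l})$ as ordinary \langAL{} syntactic sugar, which is precisely how it is used in $comp_s$ and in the loop variants $r(z)$ in the LOOP and REC rules.

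I do not expect a genuine obstacle, since the induction over the build-up of $\mu$-recursive functions -- the real content -- already lives in Lemma~\ref{lem:trans_mu_rec_to_ass}. The only points needing care are bookkeeping ones: ensuring the introduced variable $r$ is genuinely fresh (not captured by quantifiers already present in $a$, and distinct from the bound variables of the $p_f$ produced for nested applications), and being explicit that, because $\mu$-recursive functions may be partial, the equivalence between $a$ and $\hat{a}$ is asserted only on states where the relevant applications converge. That caveat is harmless for our purposes, as in the completeness proof these functions ($comp_s$ and the termination measure $v(\varsigma)$) are applied exactly on the frozen states where they are defined.
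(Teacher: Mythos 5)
Your proposal is correct and follows the same route as the paper, whose entire proof of Theorem~\ref{thm:mu_rec_in_ass} is the one-liner ``Direct consequence of Lemma~\ref{lem:trans_mu_rec_to_ass}.'' You merely make explicit what that corollary step involves (eliminating each application $f(\vec{l})$ via the graph predicate, an existential over a fresh variable, and the partiality caveat), which is a faithful elaboration rather than a different argument.
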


\begin{proof}
 Direct consequence of Lemma~\ref{lem:trans_mu_rec_to_ass}.
\end{proof}

\section{Substitutions}
\label{app:substitutions}
Analogous to the state update operation $[u := e]$, the program logic uses 3 different kinds of substitutions
on assertions.

\subsection{Substitutions}

\textbf{1. Substitution of local variables $p[x := e]$}

The substitution for local variables (or multiple local variables in parallel)
is straightforward.

It is defined by induction on the structure of $p$:
\begin{itemize}
 \item $y[x := e] \equiv \begin{cases} e \text{ if $x = y$} \\ y \text{ otherwise} \end{cases}$ (includes $y \equiv \self$)
 \item $v[x := e] \equiv v$
 \item $n$, $true$, $false$ (constants - unaffected)
 \item $l.@v[x := e] \equiv l[x := e].@v$
 \item $\PL{\textbf{if} } l \PL{ \textbf{then} } l_1 \PL{ \textbf{else} } l_2 \PL{ \textbf{end}}[x := e] \equiv \PL{\textbf{if} } l[x := e] \PL{ \textbf{then} } l_1[x := e] \PL{ \textbf{else} } l_2[x := e] \PL{  \textbf{end}}$
 \item $l_1 = l_2 \equiv l_1[x := e] = l_2[x := e]$
 \item $l_1 < l_2 \equiv l_1[x := e] < l_2[x := e]$
 \item $\llbracket l \rrbracket \in \{C_1,...,C_n\} \equiv \llbracket l[x := e] \rrbracket \in \{C_1,...,C_n\}$
 \item $l_1 \wedge l_2 \equiv l_1[x := e] \wedge l_2[x := e]$
 \item $l_1 \vee l_2 \equiv l_1[x := e] \vee l_2[x := e]$
 \item $(\neg l_1)[x := e]\equiv \neg l_1[x := e]$
 \item $(\exists y:T. l_1)[x := e] \equiv \begin{cases}
                                           \exists y:T. l_1\text{ if $x = y$} \\
                                           \exists y:T. l_1[x := e]\text{ otherwise}
                                          \end{cases}$
 \item $(\forall y:T. l_1)[x := e] \equiv  \begin{cases}
                                           \forall y:T. l_1\text{ if $x = y$} \\
                                           \forall y:T. l_1[x := e]\text{ otherwise}
                                          \end{cases}$
\end{itemize}

\textbf{2. Substitution of instance variables $p[l.@v := e]$}

The substitution for instance variables needs to take aliasing into account. For this, it is handy to have
conditionals in the assertion language.

\noindent It is defined by induction on the structure of $p$:
\begin{itemize}
 \item $l[l.@v := e] \equiv l$ for $l \equiv x, v, \self, n, true, false$ (constants - unaffected)
 \item $l'.@v'[l.@v := e] \equiv \begin{cases}
                                 \PL{\textbf{if} } l' = l \PL{ \textbf{then} } e \PL{ \textbf{else} } l'.@v' \PL{ \textbf{end}}\text{ if $@v = @v'$} \\
                                 l'.@v' \text{ otherwise} \\
                                \end{cases}$
 \item $\PL{\textbf{if} } l' \PL{ \textbf{then} } l_1 \PL{ \textbf{else} } l_2 \PL{ \textbf{end}}[l.@v := e] \equiv \PL{\textbf{if} } l'[l.@v := e] \PL{ \textbf{then} } l_1[l.@v := e] \PL{ \textbf{else} } l_2[l.@v := e] \PL{ \textbf{end}}$
 \item $l_1 = l_2 \equiv l_1[l.@v := e] = l_2[l.@v := e]$
 \item $l_1 < l_2 \equiv l_1[l.@v := e] < l_2[l.@v := e]$
 \item $\llbracket l' \rrbracket \in \{C_1,...,C_n\} \equiv \llbracket l'[l.@v := e] \rrbracket \in \{C_1,...,C_n\}$

 \item $(l_1 \wedge l_2)[l.@v := e] \equiv l_1[l.@v := e] \wedge l_2[l.@v := e]$
 \item $(l_1 \vee l_2)[l.@v := e] \equiv l_1[l.@v := e] \vee l_2[l.@v := e]$
 \item $(\neg l_1)[l.@v := e]\equiv \neg l_1[l.@v := e]$
 \item $(\exists y:T. l_1)[l.@v := e] \equiv \exists y:T. l_1[l.@v := e]$
 \item $(\forall y:T. l_1)[l.@v := e] \equiv \forall y:T. l_1[l.@v := e]$
\end{itemize}

\begin{lemma}[Substitution of instance variables]\todo{do we need this Lemma?}
\label{lem:sub-instvar}
For all logical expressions $s$ and $t$, all assertions $p$, all instance variables $@u$ and all
proper states $\sigma$
\begin{align}
\sigma(s[@u := t]) = \sigma[@u := t](s) \\
\sigma \models p[@u := t] \text{ iff } \sigma[@u := \sigma(t)] \models p.
\end{align}
\end{lemma}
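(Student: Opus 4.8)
The plan is to establish the two equalities by structural induction, proving~(1) for logical expressions first and then using it to discharge the atomic cases of~(2). Throughout I read the update $\sigma[@u := t]$ of~(1) as the value-update $\sigma[@u := \sigma(t)]$ of~(2) (states hold values, not expressions), and I read $[@u := t]$ as the instance-variable substitution of Appendix~\ref{app:substitutions} with receiver $\self$, so that $@u$ abbreviates $\self.@u$ and updating it affects only the self object's field. The only genuinely non-routine ingredient is the treatment of aliasing in the dereference case, which is exactly what the conditional in the definition of the substitution is designed to absorb; every other case is a homomorphic push of the substitution through the term or formula constructor combined with the induction hypothesis.

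For~(1) I would induct on the structure of the logical expression $s$. The base cases $s \in \{v, \PL{u}, \self, \Vnull, \uninitialized, c_\pure\}$ are immediate: the substitution leaves each unchanged, and updating an instance-variable field changes neither the logical-variable valuation, nor the local variables, nor the identity of $\self$, so both sides denote the same object. For $op(\Vector{l})$ and $\PL{\textbf{if} }\,l\,\PL{\textbf{then}}\,l_1\,\PL{\textbf{else}}\,l_2\,\PL{\textbf{end}}$ the substitution distributes over the arguments and the claim follows componentwise from the induction hypothesis. The crucial case is the dereference $l'.@v'$. When $@v' \neq @u$ the substitution recurses only into $l'$, and since the update touches only the $@u$-field, reading the unrelated field $@v'$ of the object $\sigma(l'[@u:=t]) = \sigma[@u:=\sigma(t)](l')$ yields the same value in both states. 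When $@v' = @u$ the substitution produces $\PL{\textbf{if} }\,l'[@u:=t] = \self\,\PL{\textbf{then}}\,t\,\PL{\textbf{else}}\,l'[@u:=t].@u\,\PL{\textbf{end}}$, and I would verify that this conditional, evaluated under $\sigma$, agrees with $\sigma[@u:=\sigma(t)](l'.@u)$ by splitting on whether the object denoted by $l'$ coincides with the self object.

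For~(2) I would induct on the structure of the assertion $p$, using~(1) for the atoms. The case $l_1 = l_2$ reduces directly to~(1), staying two-valued via the de~Boer--Pierik convention that dereferencing $\Vnull$ yields $\Vnull$. For $\llbracket l \rrbracket \in \{Cl\}$ membership is decided by the class field $l.@\mathbf{c}$; since the substituted variable $@u$ ranges over the program instance variables and is therefore distinct from the reserved field $@\mathbf{c}$, the update leaves every object's class intact and membership reduces to~(1) applied to $l$. The connectives $\neg a$ and $a_1 \wedge a_2$ follow from the induction hypothesis, and the tag atom $\mtagtag$ is unaffected by the substitution and independent of $\sigma$, so it is merely carried along. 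For a quantifier $\exists v \bullet a$ ranging over finite sequences the substitution commutes past the binder; here I would invoke the usual freshness assumption that the bound logical variable $v$ does not occur in $t$, so that instantiating $v$ and updating $@u$ are independent operations and the induction hypothesis transfers the witnessing sequence between $\sigma[v:=d]$ and $\sigma[@u:=\sigma(t)][v:=d]$.

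The main obstacle is the aliasing subcase $@v' = @u$ of part~(1): I must show that the syntactic case split faithfully mirrors the semantic effect of a field update read through a possibly-aliased reference. Writing $o = \sigma[@u:=\sigma(t)](l') = \sigma(l'[@u:=t])$ by the induction hypothesis on $l'$, the update sets the $@u$-field of the self object to $\sigma(t)$; if $o$ is the self object the read returns $\sigma(t)$ and otherwise it returns the unchanged field value, which is precisely the value of the two branches of the conditional under $\sigma$ (recall $\sigma(\self)$ is unchanged by the update). The two points needing care beyond this are that the definition in Appendix~\ref{app:substitutions} must be applied recursively to the receiver $l'$ for the induction to close, and that the $\Vnull$-dereference convention is respected so the ``else'' branch behaves correctly when $l'$ denotes $\Vnull$ and hence cannot equal the self object.
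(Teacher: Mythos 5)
Your proposal is correct and takes essentially the same approach as the paper, whose entire proof of Lemma~\ref{lem:sub-instvar} is the one-line ``by induction on the structure of $s$ and $p$''; you simply carry out that induction in full, with the aliasing conditional in the $l'.@v'$ case doing exactly the work the substitution definition in the appendix is designed for. Your two flagged repairs --- reading the update in the first equation as the value update $\sigma[@u := \sigma(t)]$, and applying the substitution recursively to the receiver $l'$ so the induction closes --- are the intended readings, without which the statement would be ill-typed or false.
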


\begin{proof}
 By induction on the structure of $s$ and $p$. $\Box$
\end{proof}

\textbf{2. Substitution for object creation $p[x := \mathbf{new}_C]$}

The substitution for object creation calculates the weakest precondition of an object creation statement.
For a slightly simpler case without classes, \cite[page 221]{AptDeBoerOlderogBook2009} defines a substitution
$[x := \mathbf{new}]$.
This substitution, however, is only applicable to so-called ``pure'' assertions. Fortunately, except for
conditionals, our logical expressions satisfy all requirements and \cite[page 223]{AptDeBoerOlderogBook2009}
gives a Lemma that allows eliminating conditionals like ours by substituting them for logically equivalent
expressions. We can thus use the substitution and only need to modify it slightly to reflect the
addition of classes.

The substitution is then defined by induction on the structure of $p$:
\begin{itemize}
 \item $l[x := \mathbf{new}_C] = l$ for $l \equiv \self, null, v, x, n, true, false$
 \item $l.@v[x := \mathbf{new}_C] \equiv \begin{cases} \mathit{init}_C.@v \text{ if $l \equiv x$} \\
                                                       l.@v \text{ otherwise}
                                         \end{cases}$
 \item $(\llbracket x \rrbracket \in \{C_1,...,C_n\})[x := \mathbf{new}_C] \equiv C \in \{C_1,...,C_n\}$

 \item $(l_1 = l_2)[x := \mathbf{new}_C] \equiv l_1[x := \mathbf{new}_C] = l_2[x := \mathbf{new}_C]$ if $l_1 \not\equiv x$ and $l_1 \not\equiv if...end$ and $l_2 \not= x$ and $l_2 \not\equiv if...end$
 \item $(x = l_2)[x := \mathbf{new}_C] \equiv false$ if $l_2 \not\equiv x$ and $l_2 \not\equiv if...end$ (also the symmetric case)
 \item $(x = x)[x := \mathbf{new}_C] \equiv true$

 \item $(\PL{\textbf{if} } l_0 \PL{ \textbf{then} } l_1 \PL{ \textbf{else} } l_2 \PL{ \textbf{end}} = l')[x := \mathbf{new}_C] \equiv \PL{\textbf{if} } l_0[x := \mathbf{new}_C] \PL{ \textbf{then} } (l_1 = l')[x := \mathbf{new}_C] \PL{ \textbf{else} } (l_2 = l')[x := \mathbf{new}_C] \PL{ \textbf{end}}$
 
 \item $\PL{\textbf{if} } l' \PL{ \textbf{then} } l_1 \PL{ \textbf{else} } l_2 \PL{ \textbf{fi}}[x := \mathbf{new}_C] \equiv \PL{\textbf{if} } l'[x := \mathbf{new}_C] \PL{ \textbf{then} } l_1[x := \mathbf{new}_C] \PL{ \textbf{else} } l_2[x := \mathbf{new}_C] \PL{ \textbf{fi}}$
 
 Note: conditionals can always be moved outwards to be the outmost operation in an assertion.
 
 \item $l_1 < l_2 \equiv l_1[x := \mathbf{new}_C] < l_2[x := \mathbf{new}_C]$

 \item $(l_1 \wedge l_2)[x := \mathbf{new}_C] \equiv l_1[x := \mathbf{new}_C] \wedge l_2[x := \mathbf{new}_C]$
 \item $(l_1 \vee l_2)[x := \mathbf{new}_C] \equiv l_1[x := \mathbf{new}_C] \vee l_2[x := \mathbf{new}_C]$
 \item $(\neg l_1)[x := \mathbf{new}_C]\equiv \neg l_1[x := \mathbf{new}_C]$
 \item $(\exists y:T. l_1)[x := \mathbf{new}_C] \equiv \exists y:T. l_1[x := \mathbf{new}_C] \vee l_1[x/y][x := \mathbf{new}_C]$
 \item $(\forall y:T. l_1)[x := \mathbf{new}_C] \equiv \forall y:T. l_1[x := \mathbf{new}_C] \wedge l_1[x/y][x := \mathbf{new}_C]$
\end{itemize}

\section{Clarke's Incompleteness Result \& Turing Completeness}
\label{app:turing_complete}
Clarke's Incompleteness Result \cite{Clarke1979} demonstrates that there are programming languages for which
no sound and complete Hoare Logic can exist. Since no sound and complete Hoare Logic was proposed for a
dynamically-typed programming language before, it is interesting to study whether this is at all possible.

However, the argument of Clarke is not applicable to \langDyn{} for three reasons:
\begin{enumerate}
 \item \langDyn{} does not satisfy the assumption that the expressions used in the programming language are a subset of those used in the assertion language.
This is only the case for statically-typed languages.
 \item \langDyn{} does not fulfill the language requirements Clarke bases his argument on.
In particular, it features neither
global variables nor internal procedures nor does it allow passing procedure names as parameters.
 \item Indeed, \langDyn{} ceases to be Turing complete under a finite interpretation\footnote{%
Clarke's programming language uses Integer variables.
In this case ``finite interpretation'' means restricting the program variables
to the finite subset $\{0,1\}$ of their original domain.
Since in \langDyn{} variables are of object type, interpreting them
finitely means bounding the number of objects on the Heap to a finite number.}
which will be explained in the following section.
\end{enumerate}

\subsection{Turing completeness}

\langDyn{} is of course Turing complete. Writing a \langDyn{} program simulating a Turing machine
is a straightforward excercise. However, it is not that easy to see that this expressiveness stems
only from the fact that \langDyn{} programs are allowed to create an unbounded number of objects.
In particular, while the stack depth in \langDyn{} is also unbounded, it is only possible to access
a finite number of variables at the top of the stack (the local variables of the current method) without
pop'ing (exiting the current method) which only yields the expressive power of push-down automata rather
than that of queue- or Turing-machines.
To see that this is the case, consider the following construction:
\begin{enumerate}
 \item bounding the number of objects on the Heap to some limit $k \in \mathbb{N}$ can be achieved by
 introducing a global counter and letting object creation fail once the limit is reached.
 \item It is straightforward to rewrite \langDyn's operational semantics (given in Section~\ref{sec:dyn:operational})
 in such a way that it uses a stack to handle method calls instead of begin-local-blocks.
 \item Now the states can be separated into Stack and Heap. By identifying states with the same Heap, the
labelled transition system defined by the operational semantics becomes a directed graph. Note that there
are only finitely many possible Heaps containing $\le k$ objects. This implies that diverging programs must
have state cycles.
Annotating the edges not only by computation steps, but also by stack frames being pushed and popped on method
calls / returns is possible since we only have a finite number of objects and hence also a finite number of
possible stack frames to be pushed.
This way, we can for every \langDyn{} program obtain a push-down automaton that accepts all finite computations
of the original program with an empty stack.
 \item Since emptyness is decidable for push-down automata, we hence have a method to check whether or not a given
 \langDyn{} program has a finite computation. If is does not, it will surely not halt. Thus the halting problem
for \langDyn{} with bounded Heap is decidable and \langDyn{} hence ceases to be Turing-complete when bounding the Heap.
\end{enumerate}

\fi

\end{document}